\numberwithin{equation}{section}
\newtheorem{theorem}{Theorem}[section]
\newtheorem{definition}[theorem]{Definition}
\newtheorem{lemma}[theorem]{Lemma}
\newtheorem{corollary}[theorem]{Corollary}
\newtheorem{remark}[theorem]{Remark}
\def\cl#1{{\mathscr #1}}
\def\P{{\mathbb P}}
\def\R{{\mathbb R}}
\def\N{{\mathbb  N}}
\def\E{{\mathbb E}}
\def\<{\langle}
\def\>{\rangle}
\def\eqlaw{\stackrel{\mathrm{law}}{=}}
\newcommand{\cvd}{\hfill \ensuremath{\Box}\medskip}
\def\ep{\varepsilon}
\begin{document}

\title{Short-time asymptotics for non self-similar \\
stochastic volatility models}

\author[1]{Giacomo Giorgio\thanks{giorgio@mat.uniroma2.it}}
\author[1]{Barbara Pacchiarotti\thanks{pacchiar@mat.uniroma2.it}}
\author[2]{Paolo Pigato\thanks{paolo.pigato@uniroma2.it}}
\affil[1]{Department of Mathematics, University of Rome Tor Vergata, Roma, Italy.}
\affil[2]{Department of Economics and Finance, University of Rome Tor Vergata, Roma, Italy.}

\date{\today }

\maketitle

\begin{abstract}
\noindent
We provide a short-time large deviation principle (LDP) for stochastic volatility models, where the volatility is expressed as a function of a Volterra process. This LDP does not require strict self-similarity assumptions on the Volterra process. For this reason, we are able to apply such an LDP to two notable examples of non self-similar rough volatility models: models where the volatility is given as a function of a log-modulated fractional Brownian motion [Bayer et al., Log-modulated rough stochastic volatility models. SIAM J. Financ. Math, 2021, 12(3), 1257-1284], and models where it is given as a function of a fractional Ornstein-Uhlenbeck (fOU) process [Gatheral et al., Volatility is rough. Quant. Finance, 2018, 18(6), 933-949]. In both cases we derive consequences for short-maturity European option prices, implied volatility surfaces and implied volatility skew. In the fOU case we also discuss moderate deviations pricing and simulation results.
\end{abstract}

\noindent%
{\it Keywords:} rough volatility, stochastic volatility, implied volatility, European option pricing, short-time asymptotics, fractional Brownian motion, fractional Ornstein-Uhlenbeck, modulated models, Volterra processes.

\noindent%
{\it 2020 Mathematics Subject Classification:} 91G20, 60H30, 60F10, 60G22.

\noindent%
{\it JEL Classifications:} C32, C63, G12.

\medskip

{\bf Acknowledgements:} We are grateful to Christian Bayer, Lucia Caramellino for discussion and support.
Barbara Pacchiarotti acknowledges the support of
Indam-GNAMPA (research project \lq\lq Stime asintotiche: principi di
invarianza e grandi deviazioni\rq\rq), of  the MIUR Excellence
Department Project awarded to the Department of Mathematics,
University of Rome Tor Vergata (CUP E83C18000100006) and of
University of Rome Tor Vergata (research program \lq\lq Beyond Borders\rq\rq,
project \lq\lq Asymptotic Methods in Probability\rq\rq (CUP E89C20000680005)). 

\section{Introduction}

Recent years have seen wide interest in volatility modelling with Volterra processes in the quantitative finance community. This has been spurred by the success of rough volatility models, where volatility is a non-Markovian, fractional process \citep{GJR18}. In many instances, in order to produce this type of dynamics, volatility is expressed as a function of a Volterra process, i.e. a suitable deterministic kernel integrated against a Brownian motion. In this context a very useful feature of such kernel and of the corresponding fractional process is self-similarity. 

When looking at approximation formulas and asymptotics, self-similarity is usually key 
as it enables the translation of a small-noise result into a short-time one through space-time rescaling. This can then be used to price short maturity options (see the discussion at the end of Section 3 in \cite{Gu2}, and \cite{Gu1}). Based on this procedure, several short-time formulas are available for rough volatility models, if the volatility process is expressed as a function of a fractional Brownian motion (fBM) \citep{FZ17}, as a function of a Riemann-Liouville process (RLp), as in the rough Bergomi model \citep{BFGHS,FGP22,fukasawa2020}, or as a solution to a fractional SDE, as in the fractional Heston model \citep{forde_heston_H0}.

However, obtaining short-time approximation formulas is more difficult  if volatility depends on a process which is not self-similar, such as the fractional Ornstein-Uhlenbeck (fOU) process \citep{horvath_jacquier_lacombe_2019,GS17,GS18,GS20hedging,GS19,GS20}, or the log-modulated fBM (log-fBM) \citep{BHP20}. In this paper, we address this issue, providing a short-time large deviation principle (LDP) for Volterra-driven stochastic volatilities, where the usual self similarity assumption is replaced by a weaker scaling property for the kernel, that needs to hold only in asymptotic sense (see conditions {\bf(K1)} and {\bf (K2)} below). 
We prove this general result starting from \citep{CelPac},
where a pathwise LDP for the log-price was proved when the volatility is  function of a  family of Volterra processes, and the price is solution to a scaled differential equation. Here, under suitable short-time asymptotic assumptions on the Volterra kernel, we prove a short-time LDP for the log-price process.

With our general result, we analyse more in depth models with volatility given as a function of fOU or log-fBM, neither of which is self-similar. However, we note that both these processes can be seen as a perturbation of self-similar processes, so that our general result can be applied, assuming that the price process is a martingale and a moment condition on the price.

\smallskip

The first class of processes to which we apply our LDP are \emph{log-modulated rough stochastic volatility models}, introduced in \cite{BHP20} as a logarithmic perturbation of a more standard power-law Volterra stochastic volatility model, with volatility depending on a log-fBM. These models allow for the definition of a ``true'', continuous volatility process with roughness (Hurst) parameter $H\in[0,1]$ (including the ``super-rough'' case $H=0$), at the price of losing the self-similar structure of the power-law kernel. Differently from our LDP setting,  however, in \cite{BHP20} Edgeworth-type asymptotics are considered, meaning that log-moneyness is of the form $x \sqrt{t}$ ($t$ representing the time to maturity), while in order to observe a large deviations behavior we look here at a suitable log-moneyness regime (cf. equation \eqref{eq:log:moneyness}). This regime is consistent with Forde-Zhang LDP for rough volatility
\citep{FZ17} and the related large deviation results discussed below. 
When $H>0$, we obtain a short-time LDP for the log-price process and consequent short-time option pricing, implied volatility and implied skew asymptotics. For this class of processes  the rate function only depends on the self-similar power-law kernel, while the speed depends also on the modulating logarithmic function.
It is shown in \cite{BHP20} that when $H=0$ the implied volatility skew explodes as $t^{-1/2}$ (with a logarithmic correction), realising the model free bound in \cite{LEE}. Even though our proof only holds in the $H>0$ case, the expression we obtain for skew asymptotics, computed for $H=0$, is consistent with this model free bound. 
We note that \cite{BaPa} have recently proved that in the $H=0$ case, even if the log-modulated model is well defined, an LDP cannot hold.

\medskip

The second class of models to which we apply our LDP have a stochastic volatility given by a function of a \emph{fractional Ornstein-Uhlenbeck} process. We find that, in short-time, such model behaves exactly as the analogous model, with volatility process given as the same function, computed on a fBM (i.e., the model studied in \cite{FZ17}). More precisely, we mean that the two models satisfy an LDP with same speed and rate function.
It follows that also the short-time implied skew (computed as a suitable finite difference) is the same for fOU and fBM-driven stochastic volatility models.
For small time scales, fOU is, in a sense, close to fBM (see equation \eqref{eq:FOU}), even though fOU is not self-similar.
It is not uncommon when dealing with rough stochastic volatility, starting from the groundbreaking work \cite{GJR18}, to consider at times fBM, at times fOU, depending on which is most convenient for the problem at hand. Our result can be seen as a justification of this type of procedure, as it shows that pricing vanilla options with one or the other volatility does not matter (too much) for short maturities. Moreover, the fOU process is the most standard choice for a \emph{stationary process} with a fractional correlation structure. This is one of the reasons why it has been used as volatility process for option pricing and related issues\footnote{note that both fBM and RLp (as in Rough Bergomi) are non-stationary and give rise to non-stationary volatility processes}  \citep{horvath_jacquier_lacombe_2019,GS17,GS18,GS20hedging,GS19,GS20}. 

From our short-time LDP we formally derive the corresponding moderate deviations result, consistent with the one holding for self-similar rough volatility \citep{BFGHS}. We provide numerical evidence for both these large and moderate deviations results and for the skew asymptotics. We investigate on simulations how the choice between fOU and fBM dynamics in the volatility affects volatility smiles and skews, how accurate are our approximations, and how they depend on the mean reversion parameter.

\medskip

{\bf Background.} In recent years, rough volatility has been widely used in option pricing, due to the great fits it provides to observed volatility surfaces \citep{bayer2016pricing} and its ability to capture fundamental stylized facts of the implied volatility, notably the power-law explosion of the implied skew in short-time, which explodes as $t^{H-1/2}$ under rough volatility \citep{alos2007short,fukasawa2011asymptotic,fukasawa2017short}.  Many authors have argued that $H$ is actually positive but very close to $0$ \citep{bayer2016pricing,fukasawa2020}, which would give an extreme skew explosion close to $t^{-1/2}$. This is a model-free bound \citep{LEE}, that can be reached pricing options using ``singular'' local (or local-stochastic) volatility models \citep{pig19,fps2020}, but it is hard to obtain with (rough) stochastic volatility, as in the limit $H\to 0$ the volatility process usually degenerates and can be defined only as a distribution, not as a genuine process \citep{forde_bergomi_H0,forde_heston_H0,NR18,hager2020multiplicative}. Moreover, one observes a skew-flattening phenomenon, as $H\to 0$, in some of these models.
This was the main motivation, in \cite{BHP20}, to introduce the log-modulation of the power law kernel, allowing the corresponding stochastic volatility to be defined as a genuine process also in the $H\to 0$ limit. Technically, the logarithmic correction ensures that the variance remains finite even for $H=0$, that in turn avoids the subsequent definition problems as $H\to 0$, as well as the skew-flattening problem. We refer to \cite{BHP20} and references therein for a detailed discussion of the $H\to 0$ problem. 

Volatility was already taken as an exponential function of a fOU process with $H<1/2$ when rough volatility was first proposed in \cite{GJR18}.
On the one hand, mostly because of the desired self-similarity property of the volatility process, the exponential of a fBM or of a RLp\footnote{RLp is the stochastic process driving the volatility in the rough Bergomi model \citep{bayer2016pricing}}, are often used for pricing options. 
On the other hand, it is argued, e.g. in \cite{GJR18}, that taking a fOU with small mean reversion is not very different from taking a fBM in the volatility, with the considerable advantage that fOU is a stationary process \citep{Ch-ka-Ma}, while  fBM and RLp are not. For a thorough discussion on fOU driven volatilities and related implied volatilities we refer in particular to \cite{GS17,GS20}, for the relation of fOU to fast mean reverting Markov stochastic volatility we refer
to \cite{GS18,GS19}, for hedging under fOU volatility we refer to \cite{GS20hedging},
for portfolio optimization using fast mean reverting fOU process with $H>1/2$ we refer to \cite{FH18}.
 A small-noise LDP under fOU volatility, with other related results, has been proved in \cite{horvath_jacquier_lacombe_2019}, and LDP and moderate deviation principles for the rough Stein-Stein and other models, also in short-time, have been discussed in \cite{jacquier2020volterra} (see Remark \ref{rm:related} for details).

In this paper we consider short-time pricing asymptotics, i.e. pricing short maturity European options. This is a widely studied topic, as these short maturity pricing formulas provide methods for fast calibration, a quantitative understanding of the impact of 
model parameters on generated implied volatility surfaces, led to some widely used parametrizations of the volatility surface, and help in the choice of the most appropriate model to be fitted to data \citep{aitsahalia2019}.  Short maturity approximations are also used to obtain starting points for calibration procedures, which are then based on numerical evaluations. They have applications also to hedging, trading and risk management.

For notable results on short maturity valuation formulas under Markovian stochastic volatility we refer to \cite{osajima2015general}, and to \cite{MS03,MS07} for Markovian stochastic volatility with jumps.
Short-time skew and curvature under rough volatility have been discussed in \cite{fukasawa2017short,alos2017curvature}. Short maturity valuation formulas for European options and implied volatilities under rough stochastic volatility are given, e.g., in
\cite{FZ17,euch2018short,BFGHS,FGP21,FGP22,fukasawa2020}. Short maturity local volatility under rough volatility is studied in
\cite{BDFP2022}. 
Pathwise large and moderate deviation principles for rough stochastic volatility models are established in  \cite{horvath_jacquier_lacombe_2019, jacquier2017pathwise,jacquier2020volterra,Gu1,Gu2,
gulisashvili2020timeinhomogeneous,gulisashvili2022multi,GVZ,GVZ2, CelPac, CatPac}.

\medskip

{\bf Content of the paper.} We consider in Section \ref{sec:ldp:volterra} an LDP for stochastic volatility models with volatility driven by general Volterra processes. In particular, in Section \ref{sec:shorttime} we prove a short-time LDP for such models without relying on self-similarity. In Section \ref{sec:applications} we see how these results provide short-time LDPs in relevant, non self-similar examples such as the log-fBM and the fOU process. In Section \ref{sec:pricing} we derive practical consequences for option pricing and implied volatility, for volatility models where the volatility depends on log-fBM or fOU, at the large deviations regime. In the case of fOU, we also consider moderate deviations. A numerical study of the accuracy and dependence on relevant parameters of our results in the fOU case concludes the paper in Section \ref{sec:numerics}.

\section{Large deviations for Volterra stochastic volatility models}
\label{sec:ldp:volterra}

\subsection{Small-noise large deviations for the log-price}

We are interested in stochastic volatility models with asset price dynamics described by
\begin{equation}\label{eq:price-SDE}
dS_t =  S_t\,\sigma(V_t)d(\bar{\rho}\bar B_t+\rho B_t), \qquad 0\leq t \leq T,\\
\end{equation}
where  we set, without loss of generality, $S_0=1$ the initial price. The time horizon is $ T > 0 $,
 $ \bar B $ and $ B $ are two independent standard Brownian motions, $ \rho \in (-1,1) $ is a correlation coefficient and $ \bar{\rho}=\sqrt{1-\rho^2}$, so that $\widetilde{B}= \bar{\rho}\bar B+\rho B $ is a standard Brownian motion $ \rho $-correlated with $ B $.
We assume that the process $ V $ is a non-degenerate, continuous Volterra type Gaussian process of the form
\begin{equation}\label{eq:Volterra} V_t=
\int_0^t K(t,s)\, dB_s, \quad 0\leq t \leq T. \end{equation}
Here, the kernel  $ K $ is a  measurable and square integrable function on $ [0,T]^2 $, such that
$K(0,0)=0$, $ K(t,s)=0 $ for all $ 0\leq t < s \leq T $ and
$$\sup_{t\in[0,T]} \int_0^T K(t,s)^2 \,ds < \infty. $$
 One can verify that the covariance function of the process $ V $ defined as above is given by
$$k(t,s)= E[V_{t}V_{s}]=\int_0^{t\wedge s} K(t,u)K(s,u)\,du, \quad t,s \in [0,T].$$
We introduce now the modulus of continuity of the kernel $ K $, defined as
$$
M(\delta)=  \sup_{\{t_1,t_2 \in [0,T]: |t_1-t_2|\leq \delta\}}  \int_0^T |K(t_1,s)-K(t_2,s)|^2\,ds, \quad 0 \leq \delta \leq T.
$$
In order to ensure the continuity of the paths of $V$, we assume that $K$ satisfies the following condition.
\begin{enumerate}
\item[\bf(A1)\rm]
\label{ass:Volterra-process}
		There exist constants $ c > 0 $ and $ \vartheta > 0 $ such that $ M(\delta)\leq c\,\delta^\vartheta $ for all $ \delta \in [0,T] $.
\end{enumerate}		
Let us recall that the unique solution to equation \eqref{eq:price-SDE} is  $(e^{X_t})_{t\in[0,T]}$, where the log-price process is defined by
\begin{equation}\label{eq:log-price}
X_t=-\frac 12\int_0^t\sigma^2(V_s) ds +  \bar{\rho}\int_0^t\sigma(V_s) d\bar B_s + \rho\int_0^t \sigma(V_s) dB_s.
\end{equation}
\begin{definition}
A modulus of continuity is an increasing function  $\omega:[0,+\infty)\to[0,+\infty)$
such that $\omega(0) = 0 $ and $\lim_{x\to 0 }\omega (x)=0$.
 A function $f$  defined on $\R$  is
called locally $\omega$-continuous, if for every $\delta>0$ there exists a constant $R(\delta)>0$  such that for all
$x,y\in[-\delta, \delta]$, inequality  $|f(x)-f(y)|\leq R(\delta)\omega(|x-y|)$ holds.
\end{definition}
\begin{remark}\rm
For instance, if $\omega(x)=x^{\vartheta}$, $\vartheta\in(0,1)$,   the function $f$ is locally $\vartheta $-H\"{o}lder continuous.
If $\omega(x)=x$,    the function $f$ is locally Lipschitz continuous.
\end{remark}

We consider the following assumptions on the volatility function $\sigma$.

\begin{enumerate}
\item[($\bf\Sigma 1$)\rm] \label{ass:hp-sigma-II}\rm
$ \sigma: \R \longrightarrow (0,+\infty) $ is a locally $\omega$-continuous function for some modulus of continuity $\omega$.
\item[($\bf\Sigma 2$)\rm]
\label{ass:hp-sigma-III}\rm
		 There exist constants $\vartheta, M_1,M_2>0, $ such that
		$ \sigma(x)\leq M_1+ M_2 \,|x|^\vartheta, \quad  x\in \R.$
\end{enumerate}

From now on, we denote by  $C ([0, T])$ (respectively $ C_0([0,T])$) the set of
continuous functions on $[0, T]$ (respectively the set of
continuous functions on $[0, T]$ starting at $0$),  endowed with the topology induced by the sup-norm.

Let $ \gamma_.: \mathbb{N} \to \mathbb{R}_+ $ be an infinitesimal, decreasing function,  i.e.    $ \gamma_n \downarrow 0 $, as $ n \to +\infty$ .  For every $ n \in \mathbb{N} $, we  consider the following scaled version of  equation \eqref{eq:price-SDE}
$$\begin{array}{c}
\begin{cases}
dS_t^{n} =\gamma_n  S_t^{n}\sigma(V_t^n)d(\bar{\rho}\bar B_t+\rho B_t), \qquad 0\leq t \leq T,\\
S_0^{n}=1,
\end{cases}
\end{array}$$
The  log-price process $ X_t^{n}=\log S_t^{n}, $ $ 0\leq t \leq T,$
in the scaled model is
\begin{equation}\label{eq:log-price-scaled}
X_t^{n}=-\frac 12\gamma_n^{2}\int_0^t\sigma^2(V_s^n) ds +  \gamma_n\,\bar{\rho}\int_0^t\sigma(V_s^n) d\bar B_s +  \gamma_n\,\rho\int_0^t\sigma(V_s^n) dB_s.
\end{equation}
Here  the Brownian motion $ \bar{\rho}\bar B+\rho B $ is multiplied by a \it  small-noise \rm parameter $ \gamma_n$
and  the Volterra process $V^n$  is  of the form
$$V^n_t=\displaystyle\int_0^t K^n(t,s)dB_s \quad 0\leq t \leq T,
$$
where $K^n$ is a suitable kernel. It can be verified that the covariance function of the process $ V^n$, for every $ n \in \mathbb{N}, $ is given by
$$ k^n(t,s)=\displaystyle\int_0^{t\wedge s}K^n (t,u)K^n(s,u) \, du \quad \mbox{for } t,s \in [0,T].
$$
In the setting above, we are interested in an LDP for the family $( (\gamma_n B,V^n))_{n \in \mathbb{N}} $ (we recall basic facts and notations on LDP in Appendix \ref{app:ldp}).
Such an LDP holds under the following conditions on  the covariance functions, as seen in Theorem 7.4 in \cite{CelPac}.
\begin{enumerate}
\rm
\item[{\bf(K1)\rm}] \label{ass:cov-lim}	
There exist an infinitesimal function $ \gamma_n $ and a kernel {$ \widehat K $ }(regular enough to be the kernel of a continuous Volterra process) such that

\begin{equation} \label{eq:ker-limit}\displaystyle\lim_{n \to +\infty} \frac{K^n(t,s)}{{\gamma_n}}=\widehat K(t,s) \end{equation} and

	$$ 	 \displaystyle\lim_{n \to +\infty} \frac{\int_0^{T} K^n(t,u)K^n(s,u) du} {\gamma^2_n}
	=
	\int_0^{T}\widehat K(t,u)\widehat K(s,u) du
	$$
uniformly for $ t,s \in [0,T].$

\item[{\bf(K2)\rm}]\label{eq:ker-exp-tight} There exist constants $ \beta,M>0 $, such that, for every $ n\geq n_0$
$$\displaystyle\sup_{s,t \in [0,T], s\neq t} \frac{\int_0^{T}(K^n(t,u)-K^n(s,u))^2du}{\gamma^2_n|t-s|^{2\beta}}\leq M.
$$
\end{enumerate}

\begin{theorem}\label{th:LDP-B-hatB}
 Let $ \gamma: \mathbb{N} \to \mathbb{R}_+ $ be an infinitesimal function. Suppose Assumptions {\bf(K1)\rm} and {\bf(K2)\rm}  
 are fulfilled.
	Then $ ((\gamma_nB, V^n))_{n \in \mathbb{N}} $ satisfies an LDP on $C_0([0,T])^2$ with speed $ \gamma_n^{-2} $ and good rate function
	$$
		I_{(B,V)} (f,g)=
		 \begin{cases}
		\displaystyle \frac12  \int_0^T \dot{f}(s)^2 \, ds & (f,g) \in \cl H_{(B,V)}\\
		\displaystyle +\infty & (f,g)\notin \cl H_{(B,V)}
		\end{cases}
		$$
where
$$\cl H_{(B,V)}=\{(f,g) \in C_0([0,T])^2: f \in H_0^1[0,T], \, g(t)=\int_0^t \widehat K(t,u)\dot{f}(u)\,du, \quad 0\leq t \leq T\},$$
 where $\widehat K$ is defined in equation \eqref{eq:ker-limit} and $H_0^1=H_0^1[0,T]$ is the Cameron-Martin space.
\end{theorem}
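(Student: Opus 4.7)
The plan is to reduce, via exponential equivalence, the LDP for $(\gamma_n B, V^n)$ to the one for $(\gamma_n B, \gamma_n \widetilde V)$, where $\widetilde V_t := \int_0^t \widehat K(t,u)\,dB_u$ is the fixed Volterra process driven by the limit kernel $\widehat K$ from \textbf{(K1)}. First I handle the LDP for the reference pair. Since $(\gamma_n B, \gamma_n \widetilde V)$ is exactly $\gamma_n$ times the fixed centered Gaussian process $(B, \widetilde V)$ on $C_0([0,T])^2$, Schilder-type reasoning applies directly: the pair is exponentially tight at speed $\gamma_n^{-2}$ by Fernique's theorem, and its reproducing kernel Hilbert space is precisely $\cl H_{(B,V)}$ as in the statement, the components being linked through $g(t)=\int_0^t \widehat K(t,u)\dot f(u)\,du$, which is the Cameron--Martin lift of the stochastic integral defining $\widetilde V$. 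The abstract LDP for centered Gaussian measures on a separable Banach space then gives the LDP at speed $\gamma_n^{-2}$ with good rate function $I_{(B,V)}$.

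Second, and this is the core analytic step, I would prove exponential equivalence: for every $\delta>0$,
\begin{equation*}
\limsup_{n\to\infty} \gamma_n^{2} \log \P\Bigl(\sup_{t\in[0,T]}\bigl|V^n_t - \gamma_n \widetilde V_t\bigr| > \delta \Bigr) = -\infty.
\end{equation*}
The rescaled difference $D^n_t := (V^n_t - \gamma_n \widetilde V_t)/\gamma_n = \int_0^t \bigl(K^n(t,u)/\gamma_n - \widehat K(t,u)\bigr)\,dB_u$ is a centered Gaussian process. By \textbf{(K1)} (taking $s=t$ in the convergence of the covariance, and using that $\widehat K$ inherits the limiting form of \textbf{(K2)}), $\sigma_n^2 := \sup_t \Var(D^n_t) \to 0$; by \textbf{(K2)} and the same limiting inequality for $\widehat K$, the increments of $D^n$ satisfy a uniform-in-$n$ Hölder bound $\E|D^n_t - D^n_s|^2 \leq C|t-s|^{2\beta}$. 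A Borell--TIS concentration estimate applied to $D^n$ at the level $u=\delta/\gamma_n$ yields $\P(\|D^n\|_\infty > \delta/\gamma_n) \leq 2\exp\!\bigl(-c\,\delta^2/(\gamma_n^2 \sigma_n^2)\bigr)$, which is precisely the required super-exponential decay at speed $\gamma_n^{-2}$, since $\sigma_n\to 0$.

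Finally, the standard exponential-equivalence transfer theorem (e.g.\ Dembo--Zeitouni, Thm.\ 4.2.13) carries the LDP from $(\gamma_n B, \gamma_n \widetilde V)$ to $(\gamma_n B, V^n)$ with the same rate function. The main obstacle is the Gaussian supremum estimate in the second step: one has to simultaneously leverage the vanishing of the pointwise variance from \textbf{(K1)} and the uniform modulus control from \textbf{(K2)} to push the concentration below the $\exp(-c/\gamma_n^2)$ threshold, and this is where the two kernel assumptions genuinely cooperate. Goodness of $I_{(B,V)}$ follows from compactness of the unit ball of $H_0^1[0,T]$ in $C_0([0,T])$ together with continuity of the bounded linear map $\dot f \mapsto \bigl(f,\int_0^\cdot \widehat K(\cdot,u)\dot f(u)\,du\bigr)$ from $L^2([0,T])$ into $C_0([0,T])^2$.
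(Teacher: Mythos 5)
The paper does not actually spell out a proof of this theorem: it is attributed to Theorem~7.4 of Cellupica--Pacchiarotti \citep{CelPac}, which (in the spirit of the cited exponential-tightness framework of \citealp{MacPac}) establishes LDPs for families of Gaussian processes directly, by combining exponential tightness at the correct speed with convergence of covariances to identify the limiting Cameron--Martin structure. Your proposal instead reduces to a single reference Gaussian pair $(\gamma_nB,\gamma_n\widetilde V)$, $\widetilde V_t=\int_0^t \widehat K(t,u)\,dB_u$, obtains the LDP for that pair from abstract Schilder, and then transports it by exponential equivalence. This is a genuinely different decomposition of the argument and, in this Gaussian setting, it is a legitimate alternative: the joint LDP for the pair follows immediately from the abstract Gaussian LDP, and the hypotheses {\bf(K1)}--{\bf(K2)} play exactly the roles you assign them (controlling the vanishing variance and the uniform modulus, respectively). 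The trade-off is that the intrinsic route treats $(V^n)$ on its own terms, while yours requires the existence of the reference process on the same probability space (which holds here, since $V^n$ and $\widetilde V$ are both driven by the same $B$) and a Borell--TIS concentration estimate for the difference.

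There is, however, a gap in the step that is doing the real work, namely the claim $\sigma_n^2:=\sup_t\Var(D_t^n)\to 0$. Writing $g_n^t=K^n(t,\cdot)/\gamma_n$ and $g^t=\widehat K(t,\cdot)$, you have
\[
\Var(D_t^n)=\|g_n^t-g^t\|_{L^2}^2=\|g_n^t\|_{L^2}^2-2\langle g_n^t,g^t\rangle_{L^2}+\|g^t\|_{L^2}^2 .
\]
Taking $s=t$ in {\bf(K1)} only gives the uniform convergence $\|g_n^t\|_{L^2}^2\to\|g^t\|_{L^2}^2$; it says nothing directly about the cross term $\langle g_n^t,g^t\rangle_{L^2}$. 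To conclude you must also invoke the \emph{pointwise} kernel convergence in {\bf(K1)} and combine it with the norm convergence (via a Brezis--Lieb / Riesz-type argument, or dominated convergence if a dominating function is available) to obtain $\|g_n^t-g^t\|_{L^2}\to 0$ for each fixed $t$; uniformity in $t$ then follows from the equi-H\"older bound $\bigl|\Var(D_t^n)-\Var(D_s^n)\bigr|\le C|t-s|^{\beta}$ that {\bf(K2)} (and its $\widehat K$-limit, obtained by Fatou) provides, together with compactness of $[0,T]$. As written, the proposal treats this as an immediate consequence of ``$s=t$ in {\bf(K1)}'', which it is not. Once this is supplied, the Borell--TIS step and the transfer by Dembo--Zeitouni Theorem~4.2.13 go through as you describe.
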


If  Assumptions  {\bf($\bf\Sigma 1$)} and ($\bf\Sigma 2$) 
hold for the volatility function $\sigma(\cdot)$, 
we also have a sample path LDP  for the family  of  processes $ ((X_t^{n})_{t \in [0,T]})_{n \in \mathbb{N}}$ and for the family of random  variables
$ (X_T^{n})_{n \in \mathbb{N}}$ (see Section 7 in \cite{CelPac} for details).
Let us denote $\hat f(t)=\int_0^t \widehat K(t,u)\dot{f}(u)\,du$ for  $f \in H_0^1$.
\begin{theorem}\label{th:LDP-log-price}
Under  Assumptions {\bf($\bf\Sigma 1$)}, {\bf($\bf\Sigma 2$)}, {\bf(K1)\rm} and {\bf(K2)\rm},  we have:
i) the family  of  processes $ ((X_t^{n})_{t \in [0,T]})_{n \in \mathbb{N}}$
satisfies an LDP
with speed $\gamma_n^{-2}$ and good rate function
\begin{equation}\label{eq:rate-fun-infinite} I_X(x)=\begin{cases} \inf_{f\in H_0^1[0,T]}\left[\frac12 \lVert f\rVert_{H_0^1[0,T]}^2+\frac12\int_0^T \big(\frac{\dot{x}(t)-\rho \sigma(\hat{f}(t))\dot{f}(t)}{\bar{\rho}\sigma(\hat{f}(t))}\big)^2\,dt\right]& x \in H_0^1[0,T] \\
		\displaystyle +\infty & x \notin H_0^1[0,T];
		\end{cases}
		\end{equation}
	ii) the family  of  random variables $ (X_T^{n})_{n \in \mathbb{N}}$
satisfies an LDP
with speed $\gamma_n^{-2}$ and good rate function
\begin{equation}\label{eq:rate-fun-finite}I_{X_T}(y)=\inf_{f\in H_0^1[0,T]} \left[\frac12 \lVert f\rVert_{H_0^1[0,T]}^2+\frac12 \frac{\big(y-\int_0^T \rho \sigma(\hat{f}(t))\dot{f}(t)\,dt\big)^2}{\int_0^T \bar{\rho}^2\sigma^2(\hat{f}(t)) dt}\right], \quad y\in \R.\end{equation}
\end{theorem}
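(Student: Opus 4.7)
The plan is to lift the LDP of Theorem \ref{th:LDP-B-hatB} from the pair $(\gamma_n B, V^n)$ to the enlarged triple $(\gamma_n \bar B, \gamma_n B, V^n)$, and then transport it through the map defining $X^n$ via an extended contraction argument. Schilder's theorem yields an LDP for $\gamma_n \bar B$ with speed $\gamma_n^{-2}$ and rate $\tfrac12\|\bar h\|^2_{H_0^1}$. Independence of $\bar B$ from $(B,V^n)$ combined with Theorem \ref{th:LDP-B-hatB} gives an LDP on $C_0([0,T])^3$ for the triple with good rate function
$$
J(\bar h,f,g) = \tfrac12\|\bar h\|_{H_0^1}^2 + I_{(B,V)}(f,g),
$$
finite only when $f,\bar h \in H_0^1$ and $g(t)=\hat f(t)$.

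Second, I would show that $X^n$ is superexponentially close, at speed $\gamma_n^{-2}$, to a continuous functional of the triple. The drift term $\tfrac12\gamma_n^2\int_0^\cdot \sigma^2(V_s^n)\,ds$ is of order $\gamma_n^2$ uniformly on exponentially typical sets, using the polynomial growth $(\bf\Sigma 2)$ together with Gaussian tail estimates on $\|V^n\|_\infty/\gamma_n$ that follow from $(\bf K2)$; thus it is negligible at the required scale. For the stochastic integrals $\gamma_n\bar\rho\int\sigma(V^n)d\bar B$ and $\gamma_n\rho\int\sigma(V^n)dB$, I would approximate $\gamma_n\bar B$ and $\gamma_n B$ by their piecewise linear interpolations on a mesh of size $1/N$, so that each approximation becomes a genuine Lebesgue--Stieltjes integral and hence a continuous functional of the triple in sup-norm. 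Standard exponential martingale/BDG-type bounds, again exploiting $(\bf\Sigma 2)$ and the modulus control of $V^n$ coming from $(\bf K1)$--$(\bf K2)$, give the superexponential approximation as $N\to\infty$, so the extended contraction principle (e.g.\ Dembo--Zeitouni Thm.~4.2.23) applies. This is the main technical obstacle: establishing these exponential equivalences uniformly in $n$.

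The contraction principle then yields an LDP for $X^n$ with rate function
$$
I_X(x) = \inf\Bigl\{\tfrac12\|\bar h\|^2 + \tfrac12\|f\|^2 : \ \dot x(t) = \bar\rho\,\sigma(\hat f(t))\,\dot{\bar h}(t) + \rho\,\sigma(\hat f(t))\,\dot f(t)\Bigr\},
$$
the infimum being over $(\bar h,f)\in (H_0^1)^2$, and $+\infty$ if $x\notin H_0^1$. For fixed $f$, the inner minimization over $\bar h$ is a quadratic problem with an affine pointwise constraint; since $\sigma>0$ by $(\bf\Sigma 1)$, the unique optimizer is
$$
\dot{\bar h}(t) = \frac{\dot x(t) - \rho\,\sigma(\hat f(t))\,\dot f(t)}{\bar\rho\,\sigma(\hat f(t))},
$$
which, substituted back, produces exactly formula \eqref{eq:rate-fun-infinite}. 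Part (ii) then follows by applying the continuous evaluation map $x\mapsto x(T)$ and the contraction principle: the constraint becomes the scalar condition $y = \int_0^T[\bar\rho\sigma(\hat f(t))\dot{\bar h}(t)+\rho\sigma(\hat f(t))\dot f(t)]\,dt$, and the Cauchy--Schwarz optimizer for $\bar h$ under a single linear functional constraint gives the quadratic term in \eqref{eq:rate-fun-finite}.
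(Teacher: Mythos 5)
Your overall strategy---lifting the LDP for $(\gamma_n B, V^n)$ to the independent triple $(\gamma_n\bar B, \gamma_n B, V^n)$, showing that $X^n$ is exponentially equivalent to a continuous functional of this triple, and applying the extended contraction principle with the rate function obtained by inner minimization over $\bar h$ (pointwise for part (i), Cauchy--Schwarz for part (ii))---is exactly the one the paper relies on: the paper gives no written proof of Theorem~\ref{th:LDP-log-price} and instead points to Section~7 of \cite{CelPac}, where this programme is carried out. The product LDP for the triple, the reduction to the displayed rate functions, and the use of Theorem~4.2.23 in \cite{DemZei} are all on target.

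There is, however, a technical gap in the key approximation step. You propose to replace the integrators $\bar B, B$ by their piecewise linear interpolations $\bar B^{(N)}, B^{(N)}$ and then invoke ``exponential martingale/BDG-type bounds'' to control the error. But
\[
\gamma_n\int_0^{\cdot}\sigma(V^n_s)\,d\big(\bar B_s-\bar B^{(N)}_s\big)
\]
is \emph{not} a local martingale: for $t\in(t_k,t_{k+1})$ the process $\bar B^{(N)}_t$ anticipates $\bar B_{t_{k+1}}$, so the integrand/integrator pair is not adapted and the BDG and exponential martingale inequalities you cite do not apply as stated. The route taken in \cite{CelPac} and \cite{FZ17} is the dual one: leave the Brownian integrator alone and replace the \emph{integrand} $\sigma(V^n_s)$ by the piecewise constant process $\sigma(V^n_{\phi_N(s)})$, with $\phi_N(s)$ the left grid point. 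The approximant $\gamma_n\sum_k\sigma(V^n_{t_k})(\bar B_{t_{k+1}\wedge t}-\bar B_{t_k\wedge t})$ is a sup-norm continuous map of $(\bar B, V^n)$, and the error $\gamma_n\int_0^{\cdot}\big(\sigma(V^n_s)-\sigma(V^n_{\phi_N(s)})\big)\,d\bar B_s$ \emph{is} a continuous local martingale whose bracket is controlled by the local $\omega$-continuity of $\sigma$ from {\bf($\bf\Sigma 1$)} and the modulus of $V^n$ from {\bf(K2)}, so the exponential estimates go through cleanly. Your scheme can be salvaged---conditioning on $V^n$, the error is a centred Gaussian process by independence of $\bar B$, and one can use Gaussian sup-norm bounds instead of martingale bounds---but as written the justification is not correct, and the paper's choice of what to discretize is not cosmetic: it is precisely what makes the martingale machinery available.
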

\begin{remark}\label{rem:I=J}\rm
From Theorem 4.8 in \cite{FZ17}, it follows that $ I_{X_T}(0)=0,$
$$
\begin{cases}
\inf_{y\geq x } I_{X_T}(y)=\inf_{y> x } I_{X_T}(y)=I_{X_T}(x)\quad \mbox{for}\,x>0,\\
\inf_{y\leq x } I_{X_T}(y)=\inf_{y< x } I_{X_T}(y)=I_{X_T}(x)\quad \mbox{for}\,x<0.
\end{cases}
$$
\end{remark}

\subsection{Short-time large deviations for the log-price}\label{sec:shorttime}
It is well known that if the Volterra process is self-similar we can pass  from small-noise to short-time regime (see the discussion at the end of Section 3 in \cite{Gu2}). However, in general this is not possible if the process is not self-similar.
In this section, we  obtain  a short-time LDP that does not rely on the self-similarity assumption,   by using the results of the previous section. 

Let $ \ep_.: \mathbb{N} \to \mathbb{R}_+ $ be a sequence decreasing to zero, i.e.   $\ep_n\downarrow 0$ as $n\to +\infty$.
For every $n\in\mathbb{N}$ and $t\in[0,T]$, if $V$ is a Volterra process as in (\ref{eq:Volterra}) we have
\begin{equation}\label{eq:short-Volterra}
V_{\ep_nt}=\int_0^{\ep_n t} K(\ep_nt,s) dB_s
\eqlaw
 \int_0^{t} \sqrt \ep_n  K(\ep_n t,\ep_n s) dB_s=\int_0^{t}  K^n(t,s) dB_s=V^n_t,\end{equation}
with $K^n(t,s)=\sqrt \ep_n  K(\ep_nt,\ep_n s)$.
Therefore for every $n\in\mathbb{N}$ and $t\in[0,T]$, if $X$ is as  in (\ref{eq:log-price}), we have
	\begin{eqnarray*}
		X_{\ep_nt}&=&-\frac 12\int_0^{\ep_nt}\sigma^2(V_s) ds +  \bar{\rho}\int_0^{\ep_nt}\sigma(V_s) d\bar B_s + \rho\int_0^{\ep_nt} \sigma(V_s) dB_s\\
&\eqlaw&-\ep_n\frac 12\int_0^{t}\sigma^2(V^n_s) ds +\sqrt\ep_n\,  \bar{\rho}\int_0^{t}\sigma(V^n_s) d\bar B_s +\sqrt\ep_n\, \rho \int_0^{t} \sigma(V^n_s) dB_s.
	\end{eqnarray*}
Define $V^n_t=V_{\ep_nt}$ and suppose the family of processes $ ( V^n)_{n \in \mathbb{N}} $ satisfies an LDP  with  speed $ \gamma_n^{-2} $ (depending on $\ep_n$). Suppose furthermore that the family 
 $ ((\gamma_nB, V^n))_{n \in \mathbb{N}} $ satisfies an LDP  with speed $ \gamma_n^{-2} $ (for details on this topic see Section 7 and in particular Theorem 7.4 in \cite{CelPac})
 and let $X^n$ be the process defined in (\ref{eq:log-price-scaled}).
	If we consider the processes,  defined on the same space, we have
	\begin{equation*}
		X^n_t-\gamma_n \ep_n^{-1/2}X_{\ep_nt}=\frac12(\gamma_n\ep_n^{1/2}-\gamma_n^2 )\int_0^{t}\sigma^2(V^n_s) ds.
			\end{equation*}

Let us recall that  two families  $ (Z^n)_{n\in\N} $ and $ (\tilde{Z}^n)_{n\in\N} $ of  random variables are \emph{exponentially equivalent} (at the speed $v_n$, with $v_{n}\to \infty$ as $n\to \infty$) if  for any $  \delta > 0 $,
	$$ \limsup_{n\to +\infty }\frac{1}{v_n} \log P(|\tilde{Z}^n-Z^n|>\delta) = -\infty. $$
As far as the LDP is concerned, exponentially equivalent families  are indistinguishable. See  Theorem 4.2.13 in \cite{DemZei}.

\begin{theorem}\label{th:small-time-LDP}
Under  Assumptions {\bf($\bf\Sigma 1$)}, {\bf($\bf\Sigma 2$)}, {\bf(K1)\rm} and {\bf(K2)\rm}, the two families $((X^n_t)_{t\in[0,T]})_{n\in\mathbb{N}}$ and $((\gamma_n \ep_n^{-1/2}X_{\ep_nt})_{t\in[0,T]})_{n\in\mathbb{N}}$   are exponentially equivalent  and therefore satisfy the same LDP. In particular,

(i)  the family $((\gamma_n \ep_n^{-1/2}X_{\ep_nt})_{t\in[0,T]})_{n\in\mathbb{N}}$ satisfies an LDP  with  speed $\gamma_n^{-2}$ and good rate function given by (\ref{eq:rate-fun-infinite});

(ii)  the family of random variables $ (\gamma_n \ep_n^{-1/2}X_{\ep_n T})_{n \in \mathbb{N}}$
satisfies an LDP  with  speed $\gamma_n^{-2}$ and good rate function given by (\ref{eq:rate-fun-finite}).
	\end{theorem}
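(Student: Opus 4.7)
From the identity derived just above the statement,
$$X^n_t - \gamma_n \ep_n^{-1/2} X_{\ep_n t} = \tfrac{1}{2}(\gamma_n \ep_n^{1/2} - \gamma_n^2) \int_0^t \sigma^2(V^n_s)\,ds,$$
the two processes differ, uniformly in $t \in [0,T]$, by at most $c_n A_n$ where $c_n := \tfrac{1}{2}|\gamma_n \ep_n^{1/2} - \gamma_n^2| \to 0$ and $A_n := \int_0^T \sigma^2(V^n_s)\,ds$. The strategy is to prove the super-exponential estimate
$$\limsup_n \gamma_n^2 \log \P\Bigl(\sup_{t \in [0,T]} |X^n_t - \gamma_n \ep_n^{-1/2} X_{\ep_n t}| > \delta\Bigr) = -\infty$$
for every $\delta>0$, and then invoke Theorem 4.2.13 of \cite{DemZei} together with Theorem \ref{th:LDP-log-price} to transfer the LDPs in (i) and (ii) from $X^n$ to $\gamma_n \ep_n^{-1/2} X_{\ep_n \cdot}$; the statement (ii) follows from exactly the same bound applied at the fixed time $t=T$.

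The main step is thus the super-exponential estimate. By Theorem \ref{th:LDP-B-hatB} and the contraction principle applied to the projection onto the second coordinate, $V^n$ satisfies an LDP on $C_0([0,T])$ at speed $\gamma_n^{-2}$ with a \emph{good} rate function, hence it is exponentially tight: for every $a > 0$ there exists a compact $K_a \subset C_0([0,T])$ with $\gamma_n^2 \log \P(V^n \notin K_a) \leq -a$ for $n$ large. Functions in $K_a$ are uniformly bounded in sup norm, so $R_a := \sup_{f \in K_a} \|f\|_\infty < \infty$, and by the local continuity of $\sigma$ from ($\bf\Sigma 1$) we also have $C_a := \sup_{|x| \leq R_a} \sigma^2(x) < \infty$. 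Hence on $\{V^n \in K_a\}$, $A_n \leq C_a T$, and, since $c_n \to 0$, $c_n A_n < \delta$ for all $n$ large enough. Therefore
$$\P\bigl(\sup_{t \in [0,T]} |X^n_t - \gamma_n \ep_n^{-1/2} X_{\ep_n t}| > \delta\bigr) \leq \P(V^n \notin K_a)$$
for $n$ large; taking logarithms, multiplying by $\gamma_n^2$, and letting $n \to \infty$ gives a limsup bounded by $-a$. Since $a$ is arbitrary, the required bound follows.

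\textbf{Main obstacle.} The technical difficulty is that assumption ($\bf\Sigma 2$) only gives polynomial growth of $\sigma$, so $A_n$ is genuinely unbounded and a pathwise deterministic estimate on $c_n A_n$ is not available. The plan circumvents this by exploiting the \emph{goodness} of the rate function for $V^n$ (automatic from Theorem \ref{th:LDP-B-hatB}), which yields exponential tightness and thus confines $V^n$ to a sup-norm compact set on which $\sigma$ is bounded by local continuity, at a cost decaying faster than any $e^{-a \gamma_n^{-2}}$; the vanishing prefactor $c_n$ then absorbs the deterministic bound $C_a T$ and completes the argument.
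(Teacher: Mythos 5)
Your proposal is correct and follows essentially the same route as the paper: the same identity, exponential tightness of $(V^n)_n$ deduced from the goodness of its rate function, uniform sup-norm boundedness on the resulting compacts, and the vanishing prefactor $c_n$ to make the bad event eventually empty inside the compact. The only cosmetic differences are that you spell out the contraction-principle step and the use of $(\bf\Sigma 1)$ to bound $\sigma^2$ on the compact, which the paper leaves implicit.
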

\begin{proof}
We have
	\begin{equation*}
		|X^n_t-\gamma_n \ep_n^{-1/2}X_{\ep_nt}|=\frac12|\gamma_n\ep_n^{1/2}-\gamma_n^2 |\int_0^{t}\sigma^2(V^n_s)ds=\delta_n \int_0^{t}\sigma^2(V^n_s) ds,
			\end{equation*}
where $\delta_n=\frac12|\gamma_n\ep_n^{1/2}-\gamma_n^2 |$ and $\delta_n\to 0$.
The family $(V^n)_{n\in \N}$ satisfies an LDP with a good rate function. Then, it is exponentially tight (at the inverse speed $\gamma_n^2$). Therefore for every $R>0$, there exists a compact set $C_R$ (of equi-bounded functions) such that
$\limsup_{n\to +\infty} \gamma_n^2 \log \P\Big( V^n\in C^{c}_R\Big)\leq -R$, with $(\cdot)^{c}$ indicating the complementary set.
	Thus, for every $\eta>0$,
	$$\begin{array}{c}
	\limsup_{n\to+\infty}\gamma_n^2\log\mathbb{P}\Big( \sup_{t\in[0,T]}| X^n_t-\sqrt{\gamma_n}X_{\gamma_nt} |>\eta \Big) \phantom{ghdfasasasasfgdfhgfhdsaddsdsdsdh}\\
\leq\limsup_{n\to+\infty}\gamma_n^2\log\mathbb{P}\Big( \sup_{t\in[0,T]} \int_0^{t}\sigma^2(V^n_s) ds >\eta/\delta_n ,V^n\in C_R\Big)\\
 \phantom{ghdfasasasasfgdfhgfdsfdsfdsffdhds}+\limsup_{n\to+\infty}\gamma_n^2\log\mathbb{P}\Big(  V^n\in C_R^{c}\Big)=-\infty,
	\end{array}$$
since the set $\Big\{ \sup_{t\in[0,T]} \int_0^{t}\sigma^2(V^n_s) ds >\delta/\delta_n , V^n\in C_R\Big\}$ is eventually empty.
\cvd
\end{proof}


\section{Applications}\label{sec:applications}
In this section, we consider some (non self-similar) Volterra processes that satisfy 
assumption \bf (A1) \rm and such that the corresponding family $(V^n)_{n\in\N}$ defined by equation (\ref{eq:short-Volterra}) satisfies conditions {\bf(K1)\rm}  and {\bf(K2)\rm}.  
  We also suppose that assumptions ($\bf\Sigma 1$) \rm and ($\bf\Sigma 2$) \rm on the volatility function are satisfied and $T=1$.
From Theorem \ref{th:small-time-LDP} we obtain a short-time LDP for the corresponding  log-price processes.

\subsection{Log-fractional Brownian motion and modulated models}\label{sec:modulated}
Let us consider the kernel, for $0\leq s\leq t\leq 1$,
\begin{equation}\label{eq:kernel:logmod}
{K}
(t,s)=C (t-s)^{H-1/2} (-\log(t-s))^{-p},
\end{equation}
where $0\leq H \leq 1/2$,  $p>1$ and $C>0$ is a constant.
The corresponding Volterra process $V$ essentially amounts to
 the log-fBM introduced in \cite{BHP20}. There, an additional
 cutoff of the logarithm function was introduced in order to normalize the variance of the volatility at time one, but we can avoid here this complication as it does not affect our analysis in any way, since we only consider short-time asymptotics.

Condition \bf (A1) \rm for this kernel was proved in 
\cite{BHP20} with $\vartheta=2H$.
Note that $\kappa(t,s)=C (t-s)^{H-1/2}$ is the well known kernel of the RLp, which also satisfies Assumption \bf(A1) \rm with $\vartheta=2H$.

For $n$ large enough, we set
$$ K^n(t,s)=C  \ep_n^{H}(t-s)^{H-1/2} (-\log(\ep_n(t-s)))^{-p}.$$

Let us verify that  conditions {\bf(K1)\rm}  and {\bf(K2)\rm} are satisfied for $H>0$. No small time LDP can be verified in the case $H=0$, as shown in Section 5.4 in \cite {BaPa}. 

{\bf(K1)\rm}
For $s,t\in[0,1]$, $s<t$, since we can suppose $\ep_n<1$, we have
\begin{equation}\label{eq:ineq}\frac {\log \ep_n(t-s)}{\log \ep_n }\geq 1,\end{equation} and therefore
 $$\kappa(t,u)\Big(\frac{\log \ep_n(t-u) }{\log \ep_n}\Big)^{-p}\leq \kappa(t,u).$$ 
 Then, thanks to Lebesgue's dominated convergence Theorem, for  $s,t\in[0,1]$,
$$\lim_{n\to \infty}\int_0^{s\wedge t} \kappa(t,u)\kappa(s,u)\Big(\frac{\log \ep_n(t-u) }{\log \ep_n}\Big)^{-p}\Big(\frac{\log \ep_n(s-u) }{\log \ep_n}\Big)^{-p}du=\int_0^{s\wedge t}\kappa(t,u)\kappa(s,u)du,$$ 
so that
$\lim_{n\to \infty} \frac{k^n(t,s)}{\ep_n^{2H}(-\log \ep_n)^{-2p}}=k(t,s)$.

This convergence is actually uniform,  since
 $$\frac{k^n(t,s)}{\ep_n^{2H}(-\log \ep_n)^{-2p}}= C^2\int_0^{s\wedge t} (t-u)^{H-1/2}(s-u)^{H-1/2}\Big(1+ \frac {\log (t-u)}{\log \ep_n }\Big)^{-p}
\Big(1+ \frac {\log (s-u)}{\log \ep_n }\Big)^{-p} du,$$
and therefore the sequence $(k^n(\cdot,\cdot)/{\ep_n^{2H}(-\log \ep_n)^{-2p}})_n$ is a monotone sequence of continuous functions converging pointwise to a continuous function. Then  {\bf(K1)\rm}
is proved (with $\widehat K=\kappa$ and $\gamma_n=\ep_n^{H}(-\log \ep_n )^{-p}$). 
\medskip

{\bf(K2)\rm} For every $ n \in \mathbb{N}$, $s<t$,
we have
$$ \displaylines{\int_0^t \Big(\kappa(t,u) \frac{\log \ep_n(t-u) }{\log \ep_n}\Big)^{-p}-\kappa(s,u) \frac{\log \ep_n(s-u) }{\log \ep_n}\Big)^{-p}\Big)^2 \, du\cr
=\int_s^t \kappa(t,u)^2\Big(\frac{\log \ep_n(t-u) }{\log \ep_n}\Big)^{-2p}\, du+\int_0^s \Big(\kappa(t,u) \frac{\log \ep_n(t-u) }{\log \ep_n}\Big)^{-p}-\kappa(s,u)\frac{\log \ep_n(s-u) }{\log \ep_n}\Big)^{-p}\Big)^2\, du. }$$
Now, thanks to (\ref{eq:ineq})
$$
\int_s^t \kappa(t,u)^2\Big(\frac{\log \ep_n(t-u) }{\log \ep_n}\Big)^{-2p}\, du\\
\leq \int_s^t \kappa(t,u)^2\, du.$$

Let us prove that
$$\int_0^s \Big(\kappa(t,u) \Big(\frac{\log \ep_n(t-u) }{\log \ep_n}\Big)^{-p}-\kappa(s,u) \Big(\frac{\log \ep_n(s-u) }{\log \ep_n}\Big)^{-p}\Big)^2\, du\leq
\int_0^s (\kappa(t,u) -\kappa(s,u) )^2\, du.$$

The map $x\to x^{H-1/2}(-\log x)^{-p}$ defines a decreasing function in a neighbourhood of $x=0$  and $x\to (-\log x)^{-p}$  an increasing function for $x\in (0,1)$. Then,
for $n$ large enough, for $u\in(0,s)$, we have
\begin{eqnarray*}
0&\leq& (\ep_n u)^{H-1/2}(-{\log(\ep_nu)})^{-p}- (\ep_n(t-s+u))^{H-1/2} (-{\log(\ep_n(t-s+u))} )^{-p}\\
&\leq&   (\ep_n u)^{H-1/2}(-{\log(\ep_n(t-s+u))})^{-p}- (\ep_n(t-s+u))^{H-1/2} (-{\log(\ep_n(t-s+u))} )^{-p}\\ &=& 
( (\ep_nu)^{H-1/2}- (\ep_n(t-s+u))^{H-1/2}) (-{\log(\ep_n(t-s+u)\ep_n)} )^{-p}.
\end{eqnarray*}
Therefore,
\begin{eqnarray*}
&&\int_0^s \Big(\kappa(t,u) \Big(\frac{\log \ep_n(t-u) }{\log \ep_n}\Big)^{-p}
-\kappa(s,u) \Big(\frac{\log \ep_n(s-u) }{\log \ep_n}\Big)^{-p}\Big)^2\, du\\ 
&=&\!\!\!\frac {C^2 }{\ep_n^{2H}(-\log \ep_n)^{-2p}}\!\!\! \int_0^s \!\!\!\Big((\ep_n(t-s+u))^{H-\frac 12} (-{\log(\ep_n(t-s+u)} )^{-p}- (\ep_n\,u)^{H-\frac 12}  (-{\log(\ep_n u)})^{-p}\Big)^2\!\! du\\&\leq& \frac {C^2 }{\ep_n^{2H}(-\log \ep_n)^{-2p}}\int_0^s ( (\ep_nu)^{H-1/2}- (\ep_n(t-s+u))^{H-1/2})^2 (-{\log(\ep_n(t-s+u))} )^{-2p} du
\\& =& C^2 \int_0^s ( u^{H-1/2}- (t-s+u)^{H-1/2})^2 \Big(\frac{\log(\ep_n(t-s+u))}{\log \ep_n} \Big)^{-2p}\, du\\&\leq&
C^2 \int_0^s ( u^{H-1/2}- (t-s+u)^{H-1/2})^2\, du=\int_0^s (\kappa(t,u) -\kappa(s,u) )^2\, du,
\end{eqnarray*}
Therefore conditions  {\bf(K1)\rm} and {\bf(K2)\rm}  are verified with infinitesimal function $\gamma_n=\ep_n^H(-\log \ep_n)^{-p}$, limit kernel $\widehat K=\kappa$, and
 $\beta=H$.
A short-time LDP  holds with inverse speed $\ep_n^{2H}(-\log \ep_n )^{-2p}$  and limit kernel $\kappa(t,s)=C (t-s)^{H-1/2}.$ 

\medskip

The results proved for the log-fBM can be extended to a
class of processes, that we refer to as
  \emph{modulated Volterra processes}, defined,  for $t\in[0,1]$, as
\begin{equation}\label{eq:Volterra-modulated} V_t= \int_0^t \kappa(t,s) L(t-s) dB_s.\end{equation}  
Here, 
$\kappa$ is the kernel of a self-similar  Volterra process of index $H>0$,
i.e.
\begin{equation}\label{eq:self-similar}
\kappa( c t,c s)= c^{H-1/2} \kappa(t,s), \quad \mbox{ for }  c>0,
\end{equation}
that satisfies Assumption \bf(A1)\rm  \it, modulated  \rm by a slowly varying function $L$, i.e. a function such that
$$\lim_{x\to 0^+}\frac { L(x\lambda)}{L(x)}=1,$$
for every $\lambda>0$. 
Thanks to (\ref{eq:short-Volterra})  and (\ref{eq:self-similar}),   we have
$$V^n_t=\int_0^{t}  \ep_n^H  \kappa(t, s) L(\ep_n (t-s))dB_s.$$
First we note that here $K^n(t,s)=\ep_n^H  \kappa(t, s) L(\ep_n (t-s)) $ for $s,t\in[0,1]$, $s<t$ and
$$\lim_{n\to \infty} \frac{K^n(t,s)}{L(\ep_n)\ep_n^H}=\lim_{n\to \infty}\kappa(t,s) \frac{L(\ep_n (t-s))}{L(\ep_n)}=\kappa(t,s).$$

Note that the limit kernel is independent of $L$. For these processes, if assumptions  {\bf(K1)\rm}  and {\bf(K2)\rm} are satisfied,  we have a short-time LDP with the same rate function as the self-similar process and inverse speed
$L(\ep_n)^{2}\ep_n^{2H}$. Therefore, the rate function does not depend on the modulating function $L$, but the speed of the LDP does.

\subsection{Fractional Ornstein-Uhlenbeck process}\label{sec:fOU}

Let us recall that the Mandelbrot-Van Ness fBM $B^{H}$ is the 
centered continuous Gaussian process with covariance function
\[
\frac{1}{2}\left(t^{2H}+s^{2H}-|t-s|^{2H}\right).
\]
This process is self-similar with exponent
 $H$ and admits a Volterra representation with kernel (see e.g. \cite{Nualart:06})
\begin{equation} \label{eqn:kernelfbm}
K_H(t,s) = c_H \left[ \left( \frac{t}{s} \right) ^{H-1/2}(t-s)^{H-1/2} - \left( H-\frac{1}{2} \right) s^{1/2 - H} \int_s^t \!u^{H-3/2}(u-s)^{H-1/2} du \right],
\end{equation}
where
\[
c_H = \Big(  \frac{2H \, \Gamma(3/2-H)}{\Gamma(H+1/2) \, \Gamma(2-2H)}\Big)^{1/2}.
\]
For $H\in(0,1)$ and $a > 0$, we consider the fOU process, solution to
\[
 d V_{t} = -a V_{t } dt + dB^{H}_{t}, 
\]
which is given explicitly, with initial condition $V_{0}=0$, 
by\footnote{
Let us mention that, for other purposes, one could consider the stationary solution to the fractional SDE above (see for example \cite{GJR18}), explicitly given by 
$\int_{-\infty}^t e^{-a(t-u)} dB^H_u.$
However, we are interested in this paper in option valuation, so we take as volatility driver the process $V_t$ above, with $V_{0}=0$, so that $\sigma_{0}=\sigma(V_{0})=\sigma({0})$ is spot volatility in \eqref{eq:stoch:vol:fOU}. 
}
$$
V_t= \int_0^t e^{-a(t-u)} dB^H_u, \quad t\geq 0.
$$
Here, the stochastic integral with respect to $B^{H}$ can be defined, by integration
by parts and the stochastic Fubini theorem, as
\begin{equation}\label{eq:FOU}
V_t=B^H_t - a\int_0^t e^{-a(t-u)}B^H_u du.
\end{equation}

We note from this equation that self-similarity for fOU is approximately inherited from the fBM, for small time scales.
From \eqref{eq:FOU} we obtain, for $V$, the Volterra representation 
$$V_t=\int_0^t  K(t,s) dB_s,$$
with
$$
 K(t,s)= K_H(t,s)- a \int_s^t e^{-a(t-u)} K_H(u,s) du,$$
 and $K_{H}$ as above
(see, e.g., Section 2 in \cite{CelPac}). Condition \bf (A1)  \rm for this process, with $\vartheta=\min\{2H,1\}$, was established in Lemma 10 in \cite{Gu1}.
Here we have
$$  K^n(t,s)=  \ep_n^{H}K_H(t,s)- a \ep_n^{H+1} \int_s^t e^{-a\ep_n(t-u)} K_H(u,s) du.$$

Let us verify that  conditions {\bf(K1)\rm}  and {\bf(K2)\rm} are satisfied.

{\bf(K1)\rm}  
It is enough to observe that 
$$\left|\frac{ K^n(t,s)}{{\ep_n^H}}-K_H(t,s)\right|=a \ep_n\int_s^t e^{-a\ep_n(t-u)} K_H(u,s) du\leq C\ep_n, $$
where $C>0$ is a constant independent of $s,t\in[0,1]$. Therefore
$$\displaystyle\lim_{n \to +\infty} \frac{ K^n(t,s)}{{\ep_n^H}}=K_H(t,s),$$
uniformly for $s,t\in[0,1]$.
Therefore also  
$$ \int_0^{t\wedge s}\widehat K(t,u)\widehat K(s,u) du= \displaystyle\lim_{n \to +\infty} \frac{\int_0^{t\wedge s} K^n(t,u)K^n(s,u) du} {\gamma^2_n}$$
uniformly for $ t,s \in [0,T]$ and {\bf(K1)\rm} is proved (with $\widehat K=K_H$ and $\gamma_n=\ep_n^{H}$).

{\bf(K2)\rm}  For $s<t$ we have 
\begin{eqnarray*}&&\left| K^n(t,u)- K^n(s,u)\right|\\
&\leq &\ep_n^{H}|K_H(t,u)- K_H(s,u)|+ a\ep_n^{H+1}\left| 
\int_u^t e^{-a\ep_n(t-v)} K_H(v,u)dv -\int_u^s e^{-a\ep_n(s-v)} K_H(v,u)dv\right|\\
&\leq& \ep_n^{H}|K_H(t,u)- K_H(s,u)|+ a\ep_n^{H+1}\int_s^t e^{-a\ep_n(t-v)} K_H(v,u)dv \cr & +&a\ep_n^{H+1}\left| 
\int_u^s \Big(e^{-a\ep_n(t-v)}  - e^{-a\ep_n(s-v)} \Big)k_H(v,u)dv\right|\\ 
&=&\ep_n^{H}|K_H(t,u)- K_H(s,u)|+ a\ep_n^{H+1}e^{-a\ep_n(t-s)} \int_s^t e^{-a\ep_n(s-v)} K_H(v,u)dv \cr &+&a\ep_n^{H+1}|e^{-a\ep_n(t-s)}-1| 
\int_u^s e^{-a\ep_n(s-v)} k_H(v,u)dv.
\end{eqnarray*}

Therefore, denoting by $C>0$ a  constant (not depending on $s,t\in[0,1]$), we have
\begin{eqnarray*}
&&\int_0^{1}(K^n(t,u)-K^n(s,u))^2du\cr &\leq&
\ep_n^{2H}{\int_0^{1}(K_H(t,u)-K_H(s,u))^2du}+
\ep_n^{2H+2}\int_0^1 du \Big(\int_s^t e^{-a\ep_n(t-v)} K_H(v,u)dv\Big)^2\cr &+& C (t-s)^2 \ep_n^{2H+4}
\int_0^1\Big(\int_u^s \Big(e^{-a\ep_n(t-v)} K_H(t,v)\Big)^2\cr&\leq&
\ep_n^{2H}{\int_0^{1}(K_H(t,u)-K_H(s,u))^2du}+
\ep_n^{2H+2}(t-s)\int_0^1  \int_0^1  K_H(v,u)^2 du\,dv\cr &+& C (t-s)^2 \ep_n^{2H+4}
\int_0^1  \int_0^1  K_H(v,u)^2 du\,dv\leq C(t-s)^{2H\wedge 1}\ep_n^{2H},
\end{eqnarray*}
since (see for example Lemma 8 in \cite{Gu1})
$$\displaystyle\sup_{s,t \in [0,T], s\neq t} \frac{\int_0^{1}(K_H(t,u)-K_H(s,u))^2du}{|t-s|^{2H}}\leq M.
$$
Condition  {\bf(K2)\rm} is  verified with infinitesimal function $\gamma_n=\ep_n^H$,  $\beta=H$ and limit kernel $\widehat K=K_H$. 
Therefore, the short-time asymptotic behaviour of the model with volatility given as a function of the fOU process is exactly the same as the one of the model with volatility  given as a function of the fBM, meaning that they both satisfy LDPs where the speed and rate function are the same. Indeed, the rate function in \eqref{eq:rate-fun-finite} is the same that was found in \cite{FZ17}. This can be computed numerically as we do in Section \ref{sec:numerics}.

\section{Short-time asymptotic pricing and implied volatility}\label{sec:pricing}
In this section we discuss applications to option pricing and behaviour at short maturities of implied volatility for certain stochastic volatility models, using the LDP previously discussed.
We denote 
\begin{equation}\label{eq:putcall}
p(t,k):=E[(e^{k}-S_t)^+], \quad c(t,k):=E[(S_t-e^{k})^+],
\end{equation}
the European put and call prices with maturity $t$ and log-moneyness $k$ (i.e., strike $e^{k}$, since $S_{0}=1$).

\subsection{Large deviations pricing for log-modulated models}

Let us consider the stochastic volatility model given by \eqref{eq:price-SDE} and \eqref{eq:Volterra-modulated}, i.e.
\[
\begin{cases}
dS_t &=S_t\sigma(V_t)d(\bar{\rho}\bar B_t+\rho B_t),\\
V_t&= \int_0^t \kappa(t,s) L(t-s) dB_s,
\end{cases}
\]
with $\kappa$ kernel of a self-similar process, of exponent $H<1/2$, that satisfies {\bf(A1)\rm}, and $L$ slowly varying, such that $K(t,s)=\kappa(t,s) L(t-s)$ satisfies {\bf(A1)\rm}, {\bf(K1)\rm}, {\bf (K2)\rm}. In particular, this holds true for the kernel in \eqref{eq:kernel:logmod}, that essentially is the kernel of the log-fBM in \cite{BHP20}, for $H\in[0,1/2)$. Let 
\[
\Lambda(x)=I_{X_1}(x),
\]  
where $I_{X_1}$ is the rate function in \eqref{eq:rate-fun-finite}. 

Let us write $f_{t}\approx g_{t}$ if $\log(f_{t})\sim \log(g_{t})$ (see also Appendix \ref{app:ldp}).

\begin{theorem}\label{th:ivol:logmod}
Let us assume that {\bf(A1)\rm}, {\bf(K1)\rm}, {\bf (K2)\rm},  {\bf($\bf\Sigma 1$)}, {\bf($\bf\Sigma 2$)} hold.
If $x<0$ and
\begin{equation}\label{eq:log:moneyness}
k_t=xt^{-H+1/2}L(t)^{-1},
\end{equation}
the short-time put price satisfies
\[
p(t,k_t)=E[(e^{k_t}-S_t)^+]\approx \exp\{-t^{-2H} L(t)^{-2} \Lambda(x) \}.
\]
Let us now assume that the process $S$ is a martingale and there exist $p>1,t>0$ such that $E [ S_t^p ]<\infty$ (cf. Remark \ref{rm:mart}).
If $x>0$, $k_t$ is as in \eqref{eq:log:moneyness}, we have
\[
c(t,k_t)=E[(S_t-e^{k_t})^+]\approx \exp\{-t^{-2H} {L(t)^{-2} } \Lambda(x) \}.
\]
\end{theorem}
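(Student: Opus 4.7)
The plan is to transfer the short-time LDP for the log-price into put and call asymptotics via standard payoff inequalities. From the analysis of Section \ref{sec:modulated}, Theorem \ref{th:small-time-LDP}(ii) gives that $Y_t := t^{H-1/2} L(t) X_t$ satisfies an LDP at speed $t^{-2H} L(t)^{-2}$ with good rate function $\Lambda = I_{X_1}$. The log-moneyness scaling \eqref{eq:log:moneyness} is calibrated precisely so that $\{X_t \le k_t\} = \{Y_t \le x\}$ and $\{X_t > k_t\} = \{Y_t > x\}$; combined with Remark \ref{rem:I=J}, this yields
\[
\lim_{t\to 0}\, t^{2H} L(t)^{2} \log P(X_t \le k_t) = -\Lambda(x)\ (x<0), \qquad \lim_{t\to 0}\, t^{2H} L(t)^{2} \log P(X_t > k_t) = -\Lambda(x)\ (x>0),
\]
the continuous limit $t\to 0$ being obtained by applying the LDP along arbitrary sequences $\ep_n \downarrow 0$.

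For the put ($x<0$), the upper bound is immediate from $(e^{k_t}-S_t)^+ \le e^{k_t} \mathbf{1}_{\{X_t\le k_t\}}$ together with $e^{k_t}\to 1$. For the lower bound, I fix $x'<x<0$, set $k'_t = x' t^{-H+1/2} L(t)^{-1}$, and bound $(e^{k_t}-S_t)^+ \ge (e^{k_t}-e^{k'_t})\,\mathbf{1}_{\{X_t<k'_t\}}$. Since $e^{k_t}-e^{k'_t}$ is of order $|x-x'|\,t^{-H+1/2}L(t)^{-1}$, the prefactor $t^{2H}L(t)^2$ kills its logarithm, and the LDP lower bound on the open set $\{y<x'\}$ gives $\liminf\, t^{2H}L(t)^2\log p(t,k_t) \ge -\Lambda(x')$; taking the supremum over $x'<x$ and applying Remark \ref{rem:I=J} produces $-\Lambda(x)$. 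The call lower bound for $x>0$ is perfectly symmetric, using $x'>x$ and $(S_t-e^{k_t})^+ \ge (e^{k'_t}-e^{k_t})\,\mathbf{1}_{\{X_t>k'_t\}}$.

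The main obstacle is the call upper bound for $x>0$: the single-moment hypothesis $E[S_t^p]<\infty$, $p>1$, is too weak for a naive Hölder argument, which would yield only the suboptimal decay rate $(1-1/p)\Lambda(x)$. To recover the full rate I will use a two-term splitting at a fixed level $\delta>0$:
\[
c(t,k_t) \le (e^{k_t+\delta}-e^{k_t})\,P(X_t>k_t) + E\bigl[S_t\,\mathbf{1}_{\{X_t>k_t+\delta\}}\bigr].
\]
The first term has logarithmic rate $-\Lambda(x)$ by the LDP, since $e^{k_t+\delta}-e^{k_t}\to e^\delta-1$. For the second, Hölder gives $E[S_t^p]^{1/p}\,P(X_t>k_t+\delta)^{1-1/p}$; the martingale assumption makes $(S_t^p)$ a non-negative submartingale, so $E[S_t^p]\le E[S_{t_0}^p]<\infty$ is $O(1)$ for $t$ small. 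The crucial observation is that in the $Y$-scale the threshold $k_t+\delta$ becomes $x+\delta\,t^{H-1/2}L(t)\to +\infty$, and since $\Lambda$ is a good rate function, $\inf_{y\ge A}\Lambda(y)\to +\infty$ as $A\to +\infty$. Applying the LDP upper bound to closed sets $[A,+\infty)$ for arbitrarily large $A$ then forces $t^{2H}L(t)^2\log P(X_t>k_t+\delta)\to -\infty$, so the second term is logarithmically negligible. Using $\log(a+b)\le \log 2+\max(\log a,\log b)$ yields $\limsup\, t^{2H}L(t)^2 \log c(t,k_t) \le -\Lambda(x)$, matching the lower bound and closing the proof.
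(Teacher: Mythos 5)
Your proof is correct, and the put/call lower bounds (fix a second level $x'$ or $y$ with a vanishing prefactor, apply the LDP lower bound, then pass to $x$ via Remark~\ref{rem:I=J} or continuity of $\Lambda$) match the paper's argument essentially step for step. Where you genuinely diverge is in the call upper bound. The paper splits the expectation at a \emph{fixed} rescaled level $y>x$, i.e.\ on $\{\nu_t X_t\le y\}$ versus $\{\nu_t X_t>y\}$: the first piece is handled by the LDP at $x$, the far tail by H\"older, giving $\max(-\Lambda(x),-\Lambda(y)/q)$, and then $y\to\infty$ is taken at the very end using goodness. You instead split at the \emph{unrescaled} level $X_t\le k_t+\delta$, which in the $Y_t=\nu_t X_t$ scale corresponds to the \emph{diverging} threshold $x+\delta\nu_t\to+\infty$. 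Combined with H\"older this makes the far tail term superexponentially negligible outright — its log-rate is $-\infty$ — with no extra limit to take, and the bounded-overshoot factor $e^{k_t+\delta}-e^{k_t}\to e^\delta-1$ plays the role of the paper's $e^y-e^x$. Both strategies need goodness of $\Lambda$ to kill the tail; yours uses it to force the rate of the second term to $-\infty$ along $t\to 0$, the paper's uses it to let $y\to\infty$ after taking the limsup. Your version is a little cleaner in that it avoids the order-of-limits bookkeeping. One further small improvement: to control $E[S_t^p]$ as $t\to0$ you use the submartingale property of $S_t^p$ (monotonicity of $t\mapsto E[S_t^p]$), which is the precise statement needed; the paper invokes Doob's maximal inequality, which gives the same conclusion but is a slightly heavier tool than necessary here.
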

\begin{proof} 
We just prove the call asymptotics (the least straightforward).
From Theorem \ref{th:LDP-log-price} and Theorem \ref{th:small-time-LDP}, following the computations in Section \ref{sec:modulated}, we have that
the family $ (\ep_n^{H-1/2}L(\ep_n)X_{\ep_n })_{n \in \mathbb{N}}$ satisfies an LDP  with inverse speed $\ep_n^{2H}L(\ep_n)^2$ and good rate function  $I_{X_1}$ given by formula (\ref{eq:rate-fun-finite}).
Since $\inf_{y\geq x} I_{X_1}(y)=\inf_{y>x} I_{X_1}(y)=I_{X_1}(x)$  (see Remark \ref{rem:I=J})  we have for $x>0$
$$
 \displaylines{ \lim_{n \to +\infty}\ep_n^{2H}L(\ep_n)^2\log\P(\ep_n^{H-1/2}L(\ep_n)X_{\ep_n }> x)=-\Lambda(x),}
$$
for every sequence $\ep_n\downarrow 0$. 
Therefore, setting $\nu_t=t^{H-1/2}L(t)$, so that $k_t=x/\nu_t$, we have 
$$
 \displaylines{ \lim_{t\to 0}t^{2H}L(t)^2\log\P(X_{t }> k_t)= \lim_{t\to 0}t^{2H}L(t)^2\log\P(\nu_tX_{t }> x)=-\Lambda(x),}
$$
i.e.,
\begin{equation}\label{eq:ldp:st}
\P(S_t> e^{k_t})=\P(X_t>k_t)=\P(\nu_tX_{t }> x) \approx \exp(-t^{-2H}L( t)^{-2} \Lambda(x) ).
\end{equation}
Let us prove the upper bound. Let  $t>0$ be small enough such that  $\nu_t\geq 1$ and fix $y>x$. 
We have
\begin{eqnarray*}
E [ (S_t - \exp (k_t))^+]
&=&  E [ (\exp (X_t) - \exp (k_t))^+]   \\
&=&  E [ (\exp( X_t) - \exp (k_t))^+ 1_{\{ \nu_t X_t \in (x,y]\}}]
 +
  E [  (\exp( X_t) - \exp (k_t))^+ 1_{\{\nu_t  X_t > y\}}]  \\
  &\leq& (e^{y/\nu_t}-e^{x/\nu_t}) \P ( \nu_t X_t > x ) + E [ \exp( X_t)^p ]^{1/p} \P( \nu_t X_t > y)^{1/q}
   \\
  &\leq& (e^{y}-e^{x}) \P (\nu_tX_t > x ) + E [ \exp( X_t)^p ]^{1/p} \P(\nu_t X_t > y)^{1/q}
\end{eqnarray*}
where we have used H\"older's inequality and the existence of $p>1,t>0$ such that $E [ S_t^p ]<\infty$. Moreover, $E [ S_t^p ]$ is uniformly bounded as $t\to 0$, using Doob's maximal inequality for the martingale $S$. Now from LDP \eqref{eq:ldp:st} it follows
\[
 \limsup_{t \to 0}  t^{2H}L^2(t) \log \left(E [ (S_t - \exp (k_t))^+]
\right) \leq \max \left( - \Lambda(x) , - \frac{\Lambda(y)}{q}\right)
 \]
and we conclude by taking $y$ large enough (here we also use the goodness of the rate function, which implies that $\Lambda(y) \to \infty$ as $y \to \infty$.)

Now let us look at the lower bound. We have
\begin{eqnarray*}
E [ (S_t - \exp (k_t))^+]
&\geq&    E [ (\exp( X_t) - \exp (k_t)) 1_{\{\nu_{t} X_t > y\}}] \\
&\geq&     (\exp( y / \nu_t) - \exp (x/\nu_t))P( \nu_{t}X_t > y )\\
&\geq&     \exp (k_{t}) (\exp( (y-x) / \nu_t) - 1)P( \nu_{t}X_t > y ) \\
&\geq&    \frac{y-x}{\nu_t}  \exp (k_{t}) P( \nu_{t}X_t > y).
\end{eqnarray*}
Therefore
\begin{eqnarray*}
 t^{2H}L(t)^{2} \log E [ (S_t - \exp (k_t))^+] 
&\geq&     t^{2H}L(t)^{2} (k_{t}+ \log (y-x) -  \log \nu_t )+t^{2H}L(t)^{2} \log P( \nu_{t} X_t > y)
\end{eqnarray*}
and the first summand goes to $0$ as $t\to 0$. Therefore, for any $y>x$,
\[
\liminf_{{t\to 0}}
t^{2H}L(t)^{2} \log E [ (S_t - \exp (k_t))^+]
\geq
\liminf_{{t\to 0}}
t^{2H} L(t)^{2} \log P( \nu_{t}X_t > y)\geq - \Lambda(y).
\]
By continuity of $\Lambda$  \cite[Corollary 4.10]{FZ17} and the fact that the rate function is the same as for the self-similar process, this holds 
for $\Lambda(x)$ as well and the lower bound is proved.
\cvd
\end{proof}

The following implied volatility asymptotics is a consequence of the previous result and an application of \cite{gaolee}. Let us denote with $\sim$ asymptotic equivalence ($f_t\sim g_t$ iff $f_t/ g_t\to 1$).
\begin{corollary}
For model \eqref{eq:price-SDE}, 
let us assume that {\bf(A1)\rm}, {\bf(K1)\rm}, {\bf (K2)\rm},  {\bf($\bf\Sigma 1$)}, {\bf($\bf\Sigma 2$)} hold, that $S$ is a martingale and there exist $p>1,t>0$ such that $E [ S_t^p ]<\infty$. Then, with log-moneyness as in \eqref{eq:log:moneyness} and $x\neq 0$,  the short-time asymptotics for  implied volatility
\begin{equation}
\label{eq:ivol:as}
\sigma_{BS}(t,k_t)\to \frac{x}{\sqrt{2\Lambda(x)}}=:\Sigma_{LM}(x) \mbox{ as }  t\to 0
\end{equation}
 holds.
As a consequence, with $k_t'=x' t^{-H+1/2}L(t)^{-1}$, the finite difference implied volatility skew satisfies
\begin{equation}
\label{eq:skew:as}
\frac{\sigma_{BS}(t,k_t)-\sigma_{BS}(t,k'_t)}
{k_t-k'_t}
\sim\frac{\Sigma_{LM}(x)-\Sigma_{LM}(x')}{x-x'}
t^{H-1/2}L(t)
\end{equation}
\end{corollary}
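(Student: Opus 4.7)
My plan is to translate the large-deviations asymptotics of OTM option prices (Theorem \ref{th:ivol:logmod}) into Black--Scholes implied-volatility asymptotics via \cite{gaolee}, in the spirit of \cite{FZ17}. First, let $O(t,k_{t})$ denote the OTM European option value, namely $c(t,k_{t})$ for $x>0$ and $p(t,k_{t})$ for $x<0$. Theorem \ref{th:ivol:logmod} gives, for $k_{t}=xt^{1/2-H}L(t)^{-1}$,
\[
-t^{2H}L(t)^{2}\log O(t,k_{t})\longrightarrow \Lambda(x)\quad\text{as }t\to 0.
\]

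The next step would be to match this with the Black--Scholes OTM asymptotic. Since $H<1/2$ and $L$ is slowly varying, $k_{t}\to 0$ while $|k_{t}|/\sqrt{t}\to\infty$; in this regime, for fixed $\sigma>0$,
\[
-\log c_{BS}(t,k_{t},\sigma)\sim \frac{k_{t}^{2}}{2\sigma^{2}t}=\frac{x^{2}\,t^{-2H}}{2\sigma^{2}L(t)^{2}},
\]
by a standard Laplace expansion of the Gaussian tails. Matching this with $t^{-2H}L(t)^{-2}\Lambda(x)$ forces $\sigma^{2}=x^{2}/(2\Lambda(x))$. The rigorous version of this heuristic is to invoke Corollary 7.1 of \cite{gaolee}: with their ``rate'' parameter equal to $t^{-2H}L(t)^{-2}\Lambda(x)$ and their strike $k_{t}$, the hypothesis they need (that the ratio rate/strike tends to infinity, here $t^{-(1/2+H)}L(t)^{-1}\to\infty$) is satisfied, so their general formula collapses to $\sigma_{BS}(t,k_{t})\to\Sigma_{LM}(x)$, proving \eqref{eq:ivol:as}. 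The put case follows either by an identical computation or by put--call parity, using the assumption that $S$ is a martingale with finite $p$th moment.

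For the skew, \eqref{eq:ivol:as} applied at both $x$ and $x'$ gives $\sigma_{BS}(t,k_{t})=\Sigma_{LM}(x)+o(1)$ and $\sigma_{BS}(t,k_{t}')=\Sigma_{LM}(x')+o(1)$. Since $k_{t}-k_{t}'=(x-x')t^{1/2-H}L(t)^{-1}$, dividing yields
\[
\frac{\sigma_{BS}(t,k_{t})-\sigma_{BS}(t,k_{t}')}{k_{t}-k_{t}'}=\frac{\Sigma_{LM}(x)-\Sigma_{LM}(x')+o(1)}{x-x'}\,t^{H-1/2}L(t),
\]
and the $o(1)$ remainder produces only a lower-order correction relative to the leading constant, so \eqref{eq:skew:as} follows. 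The only delicate point is the rigorous application of \cite{gaolee} in our non-self-similar, log-modulated regime: one has to confirm that the presence of the slowly-varying function $L$ in the strike scaling does not spoil the hypotheses of their implied-volatility formula, so that it reduces to the clean identity $\Sigma_{LM}(x)=x/\sqrt{2\Lambda(x)}$. Once this verification is in place, the skew statement is immediate algebra.
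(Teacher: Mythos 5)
Your argument is essentially the paper's own: both invoke Corollary 7.1 of \cite{gaolee} to turn the option-price log-asymptotics from Theorem \ref{th:ivol:logmod} into the implied-volatility limit $\sigma_{BS}^2(t,k_t)\sim -k_t^2/(2t\log c(t,k_t))\to x^2/(2\Lambda(x))$, and then obtain the skew as immediate algebra from \eqref{eq:ivol:as} at $x$ and $x'$. You add a useful sanity check (that the ``rate/strike'' ratio diverges, so the Gao--Lee hypotheses hold despite the slowly varying factor $L$), but the route is the same.
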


\begin{remark}\rm\label{rem:lmfbm}

When taking the kernel in \eqref{eq:kernel:logmod}
with $0< H \leq 1/2$ we have
\begin{equation*}\label{eq:Lt}
L(t)\sim  (-\log t)^{-p}
\end{equation*}
in \eqref{eq:skew:as}, and the finite difference skew at the LDP regime explodes as $t^{H-1/2}(-\log t)^{-p}$. We prove this for $H>0$, because {(\bf K2)\rm} fails for $H=0$. However, even for $H=0$ the process is defined and the skew asymptotics \eqref{eq:skew:as} can be computed and is consistent with the  ``Gaussian'' result at the Edgeworth regime in \cite{BHP20}.
It is also clear that
\[
\frac{\Sigma_{LM}(x)-\Sigma_{LM}(x')}{x-x'}
\]
is an approximation of $\partial_{x}\Sigma_{{LM}}(0)$ for $x,x'$ close to $0$. Assuming $\Lambda$ smooth and, as one expects, $\Lambda(0)=0$ and $\Lambda'(0)=0$, we have
\[
\frac{x}{\sqrt{2\Lambda(x)}}=\frac{1}{\sqrt{\Lambda''(0)+\frac{\Lambda'''(0)x}{3}+O(x^2)}}=
\frac{1}{\sqrt{\Lambda''(0)}}\left(1-\frac{\Lambda'''(0)}{6\Lambda''(0)}x+O(x^2)\right)
\]
so that we can approximate the implied skew as
\[
\frac{\sigma_{BS}(t,k_t)-\sigma_{BS}(t,k'_t)}
{k_t-k'_t}
\approx
 \Sigma_{{LM}}'(0)
t^{H-1/2}L(t)
 = -\frac{\Lambda'''(0)}{6\Lambda''(0)^{3/2}}  t^{H-1/2}L(t).
\]
Note, however, that \eqref{eq:skew:as} and the asymptotics in \cite{BHP20} are different mathematical results. In addition, besides providing the at-the-money behaviour, result \eqref{eq:ivol:as} can also be used to compute the whole short-dated smile, including the wings, so it can be used for calibration and, for example, for tail risk hedging.
Since, as noted at the end of Section \ref{sec:modulated}, the rate function is the same as for the self-similar process and does not depend on the modulating function $L$, it can be computed as explained in Section \ref{sec:numerics} for fOU.

\end{remark}
\begin{proof} We first prove equation \eqref{eq:ivol:as}. Apply  Corollary 7.1 - Equation (7.2)  in \cite{gaolee}, along the lines of Appendix D in \cite{FGP21} or  Corollary 4.15 in \cite{FZ17}.
Then
\[
\sigma_{BS}^2(t,k_t)\sim -\frac{1}{t}\frac{k_t^2}{2\log c(t,k_t)}
\sim \frac{x^2}{2\Lambda(x)}
\]
and taking the square root we get the result. Equation
\eqref{eq:skew:as} follows easily from equation
\eqref{eq:ivol:as}.
\cvd
\end{proof}

\subsection{Large deviation pricing under fractional Ornstein-Uhlenbeck volatility }
As consequence of  Theorem \ref{th:LDP-log-price} and Theorem \ref{th:small-time-LDP} and the computations in Section \ref{sec:fOU}, we can derive asymptotic pricing formulas for European put and call options under the price dynamics in \eqref{eq:price-SDE}, with volatility driven by the process given in \eqref{eq:FOU}. In this case, we are considering the stochastic volatility dynamics
\begin{equation}\label{eq:stoch:vol:fOU}
\begin{cases}
dS_t &=S_t\sigma(V_t)d(\bar{\rho}\bar B_t+\rho B_t),\\
d V_{t}& = -a V_{t } dt + dB^{H}_{t}, 
\end{cases}
\end{equation}
with $S_{0}=1,\,V_{0}=0$.
Notice that this is written in differential form but $V$ could also be written explicitly as in \eqref{eq:FOU}.
With the same arguments used in the proof of Theorem \ref{th:ivol:logmod}, we have
$$
\P(S_{t}>  e^{x t^{1/2-H}})=\P(X_{t}> x t^{1/2-H})
\approx \exp\{-t^{-2H} J(x) \},
$$
where  $J(x)=I_{X_1}(x)$. More explicitly, \eqref{eq:rate-fun-finite} reads
\begin{equation}\label{J}
J(x)=\inf_{f\in H_0^1[0,1]} \left[\frac12 \lVert f\rVert_{H_0^1[0,1]}^2+\frac12 \frac{\big(x-\int_0^1 \rho \sigma(\hat{f}(t))\dot{f}(t)\,dt\big)^2}{\int_0^1 \bar{\rho}^2\sigma^2(\hat{f}(t)) dt}\right].
\end{equation}

 We have the following theorem.
\begin{theorem}\label{thm:main:smalltime}
Suppose {\bf($\bf\Sigma 1$)}, {\bf($\bf\Sigma 2$)} hold.
If $x<0$ and $k_t=x t^{{H-1/2}}$, the put price in short-time satisfies
\[
p(t,k_t)=E[(e^{k_t}-S_t)^+]\approx \exp\{-t^{-2H}  J(x) \}.
\]
In addition, we now assume that the process $S$ is a martingale and there exist $p>1,t>0$ such that $E [ S_t^p ]<\infty$.
If $x>0$ and $k_t=x t^{{H-1/2}}$, we have
\[
c(t,k_t)=E[(S_t-e^{k_t})^+]\approx \exp\{-t^{-2H} J(x) \}.
\]
\end{theorem}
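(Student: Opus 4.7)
The plan is to follow the strategy used in the proof of Theorem \ref{th:ivol:logmod}, specializing the modulating function to $L\equiv 1$. First, I would invoke the results of Section \ref{sec:fOU}, where it was verified that the fOU kernel satisfies conditions \textbf{(K1)} and \textbf{(K2)} with $\gamma_n=\ep_n^H$ and limit kernel $\widehat K=K_H$. Combining this with Theorems \ref{th:LDP-log-price} and \ref{th:small-time-LDP}, the family $(\ep_n^{H-1/2} X_{\ep_n})_{n\in\mathbb{N}}$ satisfies an LDP with speed $\ep_n^{-2H}$ and good rate function $I_{X_1}=J$ given by \eqref{J}. Setting $\nu_t=t^{1/2-H}$ so that $k_t=x/\nu_t$, and using Remark \ref{rem:I=J}, I would deduce
\[
\lim_{t\to 0} t^{2H}\log\P(\nu_t X_t > x) = -J(x) \quad (x>0), \qquad
\lim_{t\to 0} t^{2H}\log\P(\nu_t X_t < x) = -J(x) \quad (x<0).
\]

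For the put asymptotics ($x<0$), the payoff is bounded by $e^{k_t}$, which handles the upper bound directly: $E[(e^{k_t}-S_t)^+]\leq e^{k_t}\P(X_t<k_t)$, and since $k_t\to -\infty$ the factor $e^{k_t}$ contributes only a polynomially decaying prefactor that is absorbed in the logarithmic asymptotics. For the lower bound, I would fix $y<x$ and write
\[
E[(e^{k_t}-S_t)^+]\geq (e^{k_t}-e^{y/\nu_t})\P(\nu_t X_t < y) \geq e^{k_t}(1-e^{(y-x)/\nu_t})\P(\nu_t X_t<y),
\]
using $1-e^{(y-x)/\nu_t}\sim (x-y)/\nu_t$, so that only the probability term contributes to the logarithmic limit and gives $-J(y)$. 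Continuity of $J$ (which follows from the general result \cite[Corollary 4.10]{FZ17}, applicable since the rate function coincides with the fBM one) then yields the desired bound upon sending $y\to x$.

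For the call asymptotics ($x>0$), the payoff is unbounded, so I would reproduce verbatim the argument of Theorem \ref{th:ivol:logmod}: split the expectation by the events $\{\nu_t X_t\in(x,y]\}$ and $\{\nu_t X_t>y\}$, apply H\"older's inequality on the second piece using the moment assumption $E[S_t^p]<\infty$ (uniformly bounded as $t\to 0$ by Doob's maximal inequality, using the martingale property of $S$), giving
\[
E[(S_t-e^{k_t})^+]\leq (e^{y}-e^{x})\P(\nu_t X_t>x)+E[S_t^p]^{1/p}\P(\nu_t X_t>y)^{1/q}.
\]
Taking logarithms, multiplying by $t^{2H}$ and sending $t\to 0$ gives $\max(-J(x),-J(y)/q)$, and then $y\to\infty$ (using goodness of $J$, so that $J(y)\to\infty$) yields $-J(x)$. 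The lower bound follows as in the log-modulated case by restricting to $\{\nu_t X_t>y\}$, using $(e^{X_t}-e^{k_t})\geq e^{k_t}((y-x)/\nu_t)$ on this event, and taking $y\downarrow x$ with continuity of $J$.

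The main obstacle is essentially conceptual rather than technical: one must be confident that the LDP proved in Sections \ref{sec:ldp:volterra} and \ref{sec:fOU} at speed $\ep_n^{-2H}$ translates cleanly into the short-time regime $t\to 0$ along arbitrary sequences, and that the Hurst parameter $H<1/2$ together with the moment assumption is enough to control the unbounded call payoff; both were already addressed in the proof of Theorem \ref{th:ivol:logmod}, so the proof here is a direct adaptation with $L\equiv 1$.
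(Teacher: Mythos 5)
Your strategy is the paper's: use {\bf(K1)}, {\bf(K2)} from Section \ref{sec:fOU} so that Theorems \ref{th:LDP-log-price} and \ref{th:small-time-LDP} give an LDP for $(\ep_n^{H-1/2}X_{\ep_n})_n$ at speed $\ep_n^{-2H}$ with rate function $J$, and then repeat the put/call bound argument from the proof of Theorem \ref{th:ivol:logmod} with $L\equiv 1$. The structure of your bounds (put upper bound via boundedness of the payoff, put lower bound restricting to $\{\nu_t X_t<y\}$, call upper bound via H\"older plus the martingale/moment hypothesis, call lower bound restricting to $\{\nu_t X_t>y\}$, and sending $y\to x$ using continuity of $J$) is exactly the paper's.

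However, you have the exponent in the rescaling factor inverted, and this renders the key LDP claim false and makes the subsequent steps internally inconsistent. The LDP holds for $t^{H-1/2}X_t$, so one must take $\nu_t=t^{H-1/2}$ (which diverges as $t\to 0$) and accordingly $k_t=x/\nu_t=xt^{1/2-H}\to 0$. You instead set $\nu_t=t^{1/2-H}\to 0$ and $k_t=xt^{H-1/2}\to\pm\infty$, evidently following the stated log-moneyness in Theorem \ref{thm:main:smalltime}; but that $t^{H-1/2}$ is a typo, as one sees by comparing the display immediately preceding the theorem, Corollary \ref{corollary:ivol}, the numerics in Section \ref{sec:numerics}, and the proof of Theorem \ref{th:ivol:logmod} itself, where $\nu_t=t^{H-1/2}L(t)$. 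With your $\nu_t=t^{1/2-H}$ the quantity $\nu_tX_t=t^{1/2-H}X_t$ is of order $t^{1-H}$ (since $X_t$ is of order $\sqrt t$), so $\P(\nu_tX_t>x)$ decays faster than $\exp(-t^{-2H}J(x))$ and $\lim_{t\to 0}t^{2H}\log\P(\nu_tX_t>x)=-\infty$, not $-J(x)$. The slip also surfaces as a contradiction later: you assert $k_t\to-\infty$, yet immediately invoke $1-e^{(y-x)/\nu_t}\sim(x-y)/\nu_t$ and $e^{y/\nu_t}-e^{x/\nu_t}\leq e^{y}-e^{x}$, both of which require $\nu_t\to\infty$, i.e., $k_t\to 0$. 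Once you correct to $\nu_t=t^{H-1/2}$ and $k_t=xt^{1/2-H}$, the prefactor $e^{k_t}\to1$ is trivially bounded (no subexponential-prefactor argument is needed for the put) and the rest of your argument goes through exactly as in Theorem \ref{th:ivol:logmod}.
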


\begin{remark}\label{rm:mart}
In both Theorems \ref{th:ivol:logmod} and \ref{thm:main:smalltime} the call price asymptotics holds under the assumption that the price process $S$ is a martingale, along with a moment condition. In the diffusive case ($H=1/2$) several related results are available. In particular, martingality holds if $\sigma$ has exponential growth and $\rho <0$ \citep{sin1998complications, jourdain2004loss, lions2007correlations}. Note that the assumption of negative correlation is justified from a financial perspective.

In the rough case, martingality is known to hold when $\sigma$ has linear growth and the driving process is the fBM \citep{FZ17}. In \cite{gassiat2018martingale}, it is 
shown that for a class of rough volatility models with $\sigma$ of exponential growth (that includes the rough 
Bergomi model) the stock price is a true martingale if and only if $\rho \leq 0$, while $E[S_{t}^{p}]=+\infty$ for $p>1/(1-\rho^{2})$, for any $t>0$. 

Models where the volatility is a function $\sigma$ of a Gaussian process are considered in \citep{Gu2}. If $\sigma$ grows faster than linearly, conditions for the explosion of moments are given both in the correlated and uncorrelated case.

For models \eqref{eq:Volterra-modulated} and \eqref{eq:stoch:vol:fOU}, these are open questions. We expect the conditions for the call asymptotics in Theorems \ref{th:ivol:logmod} and \ref{thm:main:smalltime} to hold in case $\rho <0$ and $\sigma$ with exponential growth. In particular, martingality should definitely hold in the cases analogous to \citep{gassiat2018martingale}, but with fOU driver. Indeed, the distribution of the fOU process is more concentrated than the one of the fBM, because of the mean reversion property.
\end{remark}

\begin{proof} This follows from the classic argument that we spelled out in the proof of Theorem \ref{th:ivol:logmod}. The proof follows as in Appendix C, Proof of Corollary 4.13 in \cite{FZ17}.
\cvd
\end{proof}
Again, from this call and put price asymptotics, an application of Corollary 7.1 in \cite{gaolee}  gives the following result.

\begin{corollary}
\label{corollary:ivol} 
Under the assumptions of Theorem \ref{thm:main:smalltime}, writing $k_t=xt^{1/2-H}$, we have, for $x\in \R \setminus \{0\}$, 
\begin{equation}\label{ivol:expansion}
\sigma_{BS}(t, k_t)
 \to\frac{|x|}{\sqrt{2J(x)}}=:
\Sigma_{fOU}(x), \mbox { as } t\to 0 
\end{equation}
\end{corollary}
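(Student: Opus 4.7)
The plan is to reduce the implied volatility statement to the option price asymptotics in Theorem \ref{thm:main:smalltime} via the Gao--Lee inversion formula, following exactly the mechanism already spelled out in the proof of \eqref{eq:ivol:as} for log-modulated models. I would split the argument by sign of $x$, using call prices when $x>0$ and put prices when $x<0$, so that in either case one has
\[
\log E[(S_t-e^{k_t})^+]\sim -t^{-2H}J(x),\qquad \log E[(e^{k_t}-S_t)^+]\sim -t^{-2H}J(x),
\]
as provided by Theorem \ref{thm:main:smalltime} (noting that the martingale and moment assumptions are imposed as standing hypotheses of the Corollary).

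Next I would invoke Corollary 7.1 (equation (7.2)) of \cite{gaolee}, which states that in any regime where $k_t\to 0$ and $k_t^2/t\to\infty$, one has the asymptotic equivalence
\[
\sigma_{BS}^2(t,k_t)\sim -\frac{k_t^2}{2t\log c(t,k_t)},
\]
and similarly with $c$ replaced by $p$ for out-of-the-money puts. For $k_t=xt^{1/2-H}$ with $H\in(0,1/2)$ and $x\neq 0$ both conditions hold: $k_t\to 0$ because $1/2-H>0$, while $k_t^2/t=x^2 t^{-2H}\to\infty$.

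Plugging $k_t=xt^{1/2-H}$ and the log-price asymptotics into Gao--Lee then gives
\[
\sigma_{BS}^2(t,k_t)\sim \frac{x^2 t^{1-2H}}{2t\cdot t^{-2H}J(x)}=\frac{x^2}{2J(x)},
\]
and taking square roots yields $\sigma_{BS}(t,k_t)\to |x|/\sqrt{2J(x)}=\Sigma_{fOU}(x)$, which is \eqref{ivol:expansion}.

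The only substantive point is checking that the Gao--Lee regime hypotheses are met; the remaining work is a one-line substitution. I expect no real obstacle here, as the situation is structurally identical to \cite[Cor.\ 4.15]{FZ17} and to the proof of \eqref{eq:ivol:as} already given above, with $J$ playing the role of $\Lambda$ and the modulating factor $L(t)$ absent (i.e.\ $L\equiv 1$).
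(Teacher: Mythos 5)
Your proposal is correct and matches the paper's approach: the paper likewise derives the corollary by applying Corollary 7.1 (equation (7.2)) of Gao--Lee to the call and put price asymptotics of Theorem \ref{thm:main:smalltime}, exactly as in the explicit proof of \eqref{eq:ivol:as} for the log-modulated case, with $J$ in place of $\Lambda$ and $L\equiv 1$. Your verification that the Gao--Lee regime conditions $k_t\to 0$ and $k_t^2/t\to\infty$ hold for $H\in(0,1/2)$ is a useful explicit check that the paper leaves implicit.
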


As a consequence, the behavior of the implied skew at the large deviations regime under fOU-driven volatility is as follows.
\begin{corollary}
\label{corollary:skew} 
Under the assumptions of Theorem \ref{thm:main:smalltime}, writing $k_t=xt^{1/2-H}$, we have, for $x>0$, 
\begin{equation}\label{skew:expansion}
\frac{\sigma_{BS}(t, k_t)-\sigma_{BS}(t, -k_t)}{2 k_{t}}
 \sim \frac{\Sigma_{fOU}(x)-\Sigma_{fOU}(-x)}{2x} t^{H-1/2}, \mbox { as } t\to 0. 
\end{equation}
\end{corollary}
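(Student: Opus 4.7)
The plan is to derive the skew asymptotic as an immediate consequence of the at-the-money-limit in Corollary \ref{corollary:ivol}, applied separately at the two log-moneyness points $k_t = xt^{1/2-H}$ and $-k_t = (-x)t^{1/2-H}$.

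First, I would observe that since $x>0$, both $x$ and $-x$ are nonzero and lie in the domain where Corollary \ref{corollary:ivol} applies. Writing $k_t^{+}=xt^{1/2-H}$ (the point $k_t$ in the statement) and $k_t^{-}=(-x)t^{1/2-H} = -k_t$, equation \eqref{ivol:expansion} yields
\[
\sigma_{BS}(t,k_t^{+}) \to \Sigma_{fOU}(x), \qquad \sigma_{BS}(t,k_t^{-}) \to \Sigma_{fOU}(-x),
\]
as $t\to 0$. Subtracting, we obtain the pointwise limit
\[
\sigma_{BS}(t,k_t) - \sigma_{BS}(t,-k_t) \to \Sigma_{fOU}(x) - \Sigma_{fOU}(-x).
\]
Provided that $\Sigma_{fOU}(x)\neq \Sigma_{fOU}(-x)$ (which is the generic situation, e.g.\ whenever the correlation $\rho$ is nonzero, as the rate function $J$ in \eqref{J} is not symmetric in $x$), this pointwise limit upgrades to the asymptotic equivalence
\[
\sigma_{BS}(t,k_t) - \sigma_{BS}(t,-k_t) \sim \Sigma_{fOU}(x) - \Sigma_{fOU}(-x), \quad t\to 0.
\]

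Next, I would simply divide through by $2k_t = 2xt^{1/2-H}$. Since this divisor is a deterministic, nonzero quantity for each $t>0$, the asymptotic equivalence is preserved, giving
\[
\frac{\sigma_{BS}(t,k_t) - \sigma_{BS}(t,-k_t)}{2k_t}
\sim \frac{\Sigma_{fOU}(x) - \Sigma_{fOU}(-x)}{2x\,t^{1/2-H}}
= \frac{\Sigma_{fOU}(x) - \Sigma_{fOU}(-x)}{2x}\,t^{H-1/2},
\]
which is exactly \eqref{skew:expansion}. The power $t^{H-1/2}$ arises purely from the rescaling factor inherent in the log-moneyness window $k_t = xt^{1/2-H}$, not from any additional expansion of the implied volatility.

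No real obstacle is expected: the entire argument rests on Corollary \ref{corollary:ivol}, together with the trivial fact that the difference of two convergent sequences converges to the difference of the limits. The only subtle point is the standing assumption that the limit $\Sigma_{fOU}(x)-\Sigma_{fOU}(-x)$ is nonzero, which is needed for the $\sim$ symbol to carry its usual meaning; in the degenerate symmetric case ($\rho=0$, symmetric $\sigma$), both sides of \eqref{skew:expansion} vanish to higher order and the statement should be interpreted as $o(t^{H-1/2})$ on both sides.
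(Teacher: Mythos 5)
Your proof is correct and mirrors the paper's (implicit) argument: the paper leaves Corollary~\ref{corollary:skew} unproved, but the analogous corollary for log-modulated models is dispatched with ``\eqref{eq:skew:as} follows easily from equation \eqref{eq:ivol:as},'' which is exactly the derivation you spell out — apply Corollary~\ref{corollary:ivol} at $\pm k_t$, subtract, and divide by $2k_t$. Your caveat that $\Sigma_{fOU}(x)\neq\Sigma_{fOU}(-x)$ is needed for the asymptotic equivalence $\sim$ to be meaningful (true generically since $\rho\neq0$ breaks the symmetry of $J$) is a genuine subtlety the paper glosses over, and worth stating.
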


\begin{remark}[On moderate deviations]\label{th:moderate_deviations}

Model \eqref{eq:stoch:vol:fOU} should satisfy a moderate deviation result analogous to the ones in \cite{BFGHS} and Theorem 3.13 in \cite{FGP22}. 
Let $c(\cdot,\cdot)$ be as in \eqref{eq:putcall}, the price process $S$ given in \eqref{eq:stoch:vol:fOU}.
Assume that $J$ is $n\in \N$ times continuously differentiable.
Let $H\in (0,1/2)$, $\beta>0$ and $n\in \N$ such that 
$ \beta \in (\frac{2H}{n+1}, \frac{2H}{n}]$.
Set $\ell_t=x t^{1/2-H+\beta}$. Then, we can formally compute the call asymptotics from 
Theorem \ref{thm:main:smalltime}, plugging $\ell_{t}$ as log price instead of $k_{t}$, so that we substitute $x_{t}=xt^{\beta}$ to $x$ in a Taylor expansion of $J$ at $0$ and get
\[
J(x_t)=\sum_{i=2}^{n} \frac{J^{(i)}(0)}{i!} x^i t^{i\beta} +O(t^{(n+1)\beta}).
\]
Now, consider the speed $t^{{2H}}$ in Theorem \ref{thm:main:smalltime} and that $t^{(n+1)\beta-2H}\rightarrow 0$ if $ \beta \in(\frac{2H}{n+1}, \frac{2H}{n}]$, 
recall from  \cite{FZ17} and \cite{BFGHS} that $J(0)=J'(0)=0,\,J''(0)=1/\sigma(0)^{2}$ and we find that the call price should satisfy the following moderate deviations asymptotics, as $t\to 0$,
\[
\log 
c(t,\ell_t)= -
\sum_{i=2}^{n} \frac{J^{(i)}(0)}{i!} x^i t^{i\beta-2H}+O(t^{(n+1)\beta-2H}).
\]
We expect that a complete proof of this fact could be adapted from Proof of Theorem 3.13 in \cite{FGP22} or Proof of Theorem 3.4 in \cite{BFGHS}.
Assuming this call price asymptotics holds true, the following implied volatility asymptotics can be derived using Corollary 7.1, Equation (7.2) in \cite{gaolee} and that $J''(0)=1/\sigma(0)^{2}$
\begin{equation}\label{exp:imp:vol}
\begin{split}
\sigma_{BS}^2(t, \ell_t)=
\sum_{j=0}^{n-2}
(-1)^j 2^j \sigma(0)^{2(j+1)}
 \left(
\sum_{i=3}^{n} \frac{J^{(i)}(0)}{ i!} x^{i-2} t^{(i-2)\beta}
 \right)^j
+O(t^{2H-2\beta}).
\end{split}
\end{equation}
\end{remark}

\begin{remark}[On related results]
\label{rm:related} 
 A pathwise small-noise LDP under fOU volatility has been proved in \cite{horvath_jacquier_lacombe_2019}, with different hypothesis in particular on the function $\sigma$. From this LDP, a short-time result for a suitably renormalized process is also derived, with a time-scaling different from ours. 
 
In \cite{jacquier2020volterra} asymptotic results are given for Volterra driven volatility models, including large and moderate deviations, also in small-time. Hypothesis on the models are different from ours, for example $\sigma^{2}(x)$ is of linear growth, or alternatively a moment condition of type $\E[\sigma(V)^{2p}]<\infty$ for any $p\geq 1$ holds. The rate function  is given as an expression involving fractional derivatives of the minimiser.  In particular, in \cite[Section 4.2.1]{jacquier2020volterra} these results are applied to the rough Stein-Stein model, which is similar to \eqref{eq:stoch:vol:fOU}, with the RLp instead of the fBM, and with the specific choice of volatility function $\sigma^{2}(x)=x^{2}$. Analogous results should also hold with the fBM instead of the RLp as driver of the volatility.
 \end{remark}

\begin{remark}[On applications]
As mentioned in the introduction, short-time asymptotic approximations to the implied volatility surface are used for model calibration, pricing and other applications. They give information on option prices with short maturity, with low computational burden. This helps for example in the creation of delta-hedging strategies that are sensitive to short-term moves in the underlying and in general in trading and risk management.
Efficient and accurate methods for calibrating fOU-driven volatility models are relevant, for example, because these volatility models are used for computing option prices and implied volatilities \citep{GS17,GS20} and for hedging \citep{GS20hedging}.  Furthermore, \cite{GS18} compare the price impact of fast mean-reverting Markov stochastic volatility models with the price impact of mean reverting rough volatility models (see also \cite{GS19}). In \cite{FH18}, a model with both return and volatility driven by a fast mean reverting fOU process are used for portfolio optimization, in the $H>1/2$ regime.  
\end{remark}

\section{Numerical experiments}\label{sec:numerics}

In this section we test the accuracy of short-time pricing formulas \eqref{ivol:expansion}, and \eqref{exp:imp:vol} and of the implied skew asymptotics \eqref{skew:expansion}. We do so for a stochastic volatility model with asset price dynamics given by \eqref{eq:price-SDE}, with both fBM-driven volatility (i.e., $V=B^H$ is the fBM)
and fOU-driven volatility (i.e., $V=V^H$  is the fOU process, as in \eqref{eq:stoch:vol:fOU}).
Recall that both fBM and fOU models lead to the same rate function. 

For numerical experiments with log-fBM volatility, we refer to \cite{BHP20}. In particular, the discussion in Remark \ref{rem:lmfbm} on the at-the-money implied skew for log-modulated models is consistent with the numerical evaluations of at-the-money skews in \cite[Section 7]{BHP20}.

From Section \ref{sec:fOU}, we have the Volterra representation of the fBM 
$$
B^H_r = \int_0^t K_H(r,s) dB_s,
$$
where $K_H$ is the kernel in \eqref{eqn:kernelfbm}, and the Volterra representation of the fOU process
\begin{equation}\label{FOU}
V^H_r= \int_0^r K(r,s) ds = \int_0^r \Big(K_H(r,s)-a\int_s^r e^{-a(r-u)} K_H(u,s)du       \Big) dB_s.
\end{equation}
To evaluate the quality of approximations \eqref{ivol:expansion}, \eqref{skew:expansion} and \eqref{exp:imp:vol}, we first simulate Monte Carlo call prices under both these models, from which we then recover Black-Scholes implied volatilities. In both cases we consider a volatility function $\sigma(\cdot)$, depending on positive parameters $\sigma_{0}, \eta$ given by
\begin{equation}\label{volFun}
\sigma(x)=\sigma_0 \exp\Big({\frac{\eta}{2} \,x}\Big).
\end{equation}
To compute these prices under our stochastic volatility dynamics, we need to simulate the asset price at the fixed time horizon $t>0$. Hence we consider a time-grid $t_k= k\frac{t}{N}$, $k=0,\ldots, N$, and on this grid the random vector $(V_{t_1},\ldots, V_{t_N}, B_{t_1}, \ldots, B_{t_N})$, first with $V=B^H$ and then $V=V^H$. In both cases, it is a multivariate Gaussian vector with zero mean and known covariance matrix, that can be computed from the Volterra representation of the processes. The whole vector can be simulated using a Cholesky factorization of this covariance matrix. We then use this vector to construct an approximate sample of the log-asset price 
\[
X_t=-\frac 12\int_0^t\sigma^2(V_s) ds  + \rho\int_0^t \sigma(V_s) dB_s +\bar{\rho}\int_0^t\sigma(V_s) d\bar B_s
\]
by using a forward Euler scheme on the same time-grid
$$
X_t^N= -\frac{t}{2N}\sum_{k=0}^{N-1} \sigma^2(V_{t_k})+\sum_{k=0}^{N-1} \sigma(V_{t_k})\Big(\rho(B_{t_{k+1}}-B_{t_k} )+ \bar \rho(\bar B_{t_{k+1}}-B_{t_k}) \Big).
$$
We produce $M$ i.i.d. approximate Monte Carlo samples $(X_t^{N,m}, V_T^{m})_{1\leq m\leq M}$, that we use to evaluate call option prices by standard sample average. Then, we compute the corresponding implied volatilities $\sigma_{BS}(t, k)$ by Brent's method (see \cite{At08}, \cite{Pr08}), where $t$ is the maturity and $k$ the log-moneyness.

Note that Theorem \ref{thm:main:smalltime}, Corollary \ref{corollary:ivol} and  Corollary \ref{corollary:skew} do not apply to the model above, because
$\sigma(\cdot)$ does not satisfy the polynomial growth condition {\bf($\bf\Sigma 2$)}. However, also in in the self-similar case, large deviations pricing results were first obtained under  linear growth conditions in \cite{FZ17} and then the conditions were weakened  in \cite{bayer_regularity, Gu1} to include exponential growth. Therefore, we chose here to test our result on the exponential volatility in \eqref{volFun}, for which our result should hold as well. This choice is more realistic, being analogous to the rough Bergomi model, and being the volatility function considered e.g. in \cite{GS20hedging}.

To evaluate the accuracy of large deviations approximation \eqref{ivol:expansion}, we follow the choice in \cite{FGP22} and use as model parameters $H=0.3, \rho=-0.7, \sigma_0=0.2, \eta=1.5$, and as mean reversion parameter in fOU we take  $\alpha=1$ or  $\alpha=2$. These parameters are similar to the ones estimated on empirical volatility surfaces, as for example in \cite{bayer2016pricing}. We take a rough, but not ``extremely rough'' ($0.3$ instead of $0.1$) Hurst parameter $H$, motivated by the recent study \cite{GuyonSkew}.

We simulate $M=10^6$ Monte Carlo samples using $N=500$ discretization points. 
We estimate call option prices $E[(S_0e^{X_t}- S_0e^{k_t})^+ ]$, where $k_t= x t^{1/2-H}$, and the corresponding implied volatility $\sigma_{BS}(t, k_t)$.

Then, we need to compute $\Sigma_{{fOU}}$. The rate function $J$ in \eqref{J} can be approximated numerically using the Ritz method, as described in detail in \cite[Section 40]{gelfand_fomin}, \cite{FZ17}, and
\cite[Remark 4.3 and Section 5.1]{FGP22}. The rate function is obtained trough numerical optimization on a fixed, finite number of coefficients associated to a basis of the Cameron-Martin space $H_0^1$. We take as basis the Fourier basis, i.e. $\{\dot e_i\}_{i\in \N}$ with
$$
\dot e_1(s)= 1, \qquad \dot e_{2n}(s)=\sqrt{2} \cos(2\pi n s),\qquad \dot e_{2n+1}=\sqrt{2}\sin(2\pi n s), \quad n \in \N\setminus \{0\},
$$
that we truncate to $N=5$ (larger values of $N$ did not seem to improve the computation) and use the more explicit representation of the rate function $J$ in \eqref{J} given in
\cite{FZ17}, \cite[Proposition 5.1]{BFGHS}, \cite[Section 5.1]{FGP22}.

In Figure \ref{fig:1} we show for each model how, as the maturity $t$ becomes smaller, $\sigma_{BS}(t, k_t)$ gets closer to the asymptotic limit in \eqref{ivol:expansion}, where $k_t=x t^{1/2-H}$. We recall that $t$ is the option maturity and we numerically evaluate $\sigma_{BS}(t, k_t)$ for $t\in \{0.05, 0.1, 0.2, 0.3, 0.5 \}$ and $x \in [-0.2, 0.2]$ for $50$ equidistant points. The fact that, even for very small maturities, the short-time limit is not reached, can be explained by the fact that the error is of order $t^{2H}$ (as shown in the self-similar case in \cite{FGP22}), which vanishes as $t\to 0$, albeit slowly, since $H<1/2$.

In Figure \ref{fig:2}, for each fixed maturity, we compare the implied volatility smiles produced by each model (fOU vs fBM-driven volatilities), in order to observe the influence of the magnitude of the mean reversion parameter $a$ on the volatility smiles. In particular, we note that implied volatilities generated by fOU-driven models seem to fall between the implied volatilities generated by fBM-driven models and the asymptotic smile, indicating convergence also if polynomial growth of $\sigma(\cdot)$ is not satisfied in this example. 
\begin{figure}[t]\centering
\includegraphics[width=0.49\linewidth]{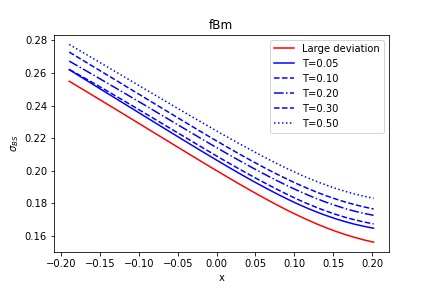}\\
\includegraphics[width=0.49\linewidth]{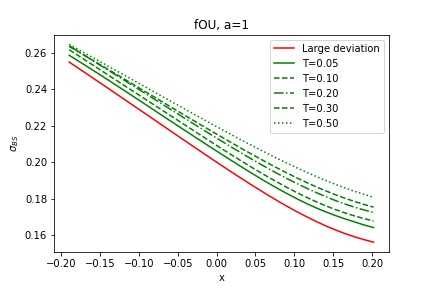}
\includegraphics[width=0.49\linewidth]{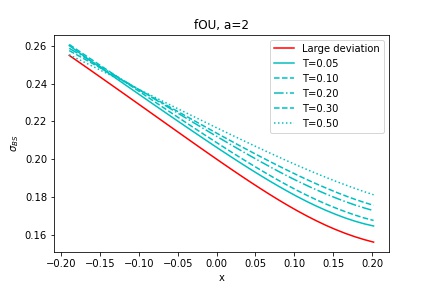}
\caption{Implied volatility smile with fBM-driven stochastic volatility, fOU-driven stochastic volatility with $a=1$, fOU-driven stochastic volatility with $a=2$ and large deviation approximation \eqref{ivol:expansion}. Model parameters: $H=0.3, \rho=-0.7, \sigma_0=0.2$ and $\eta=1.5$. Monte Carlo parameters: $10^6$ trajectories and $500$ time-steps. Recall that $k_t=x t^{1/2-H}$. The rate function is computed with the Ritz method with $N=5$ Fourier basis function.  \label{fig:1}
 }

\end{figure}

\begin{figure}[t]

\centering
\includegraphics[width=0.49\linewidth]{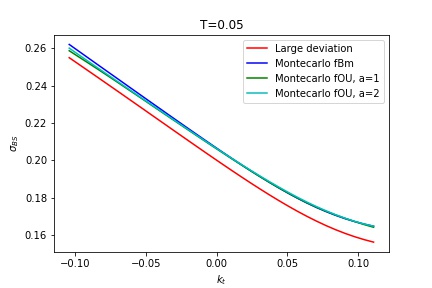}
\includegraphics[width=0.49\linewidth]{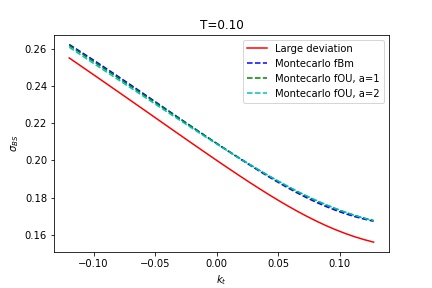}

\includegraphics[width=0.49\linewidth]{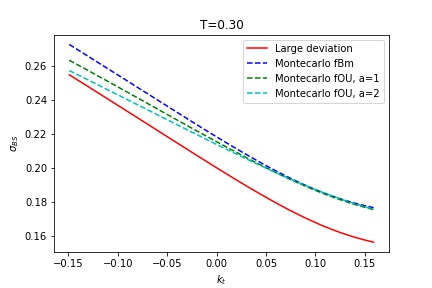}
\includegraphics[width=0.49\linewidth]{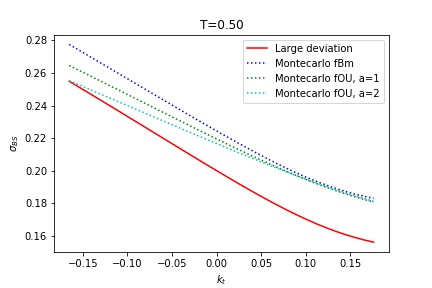}

\caption{Implied volatility smile with fBM-driven stochastic volatility, fOU-driven stochastic volatility with $a=1$, fOU-driven stochastic volatility with $a=2$ and large deviation approximation \eqref{ivol:expansion}. Model parameters: $H=0.3,\, \rho=-0.7, \sigma_0=0.2$ and $\eta=1.5$. Monte Carlo parameters: $10^6$ trajectories and $500$ time-steps. We plot each model fixing the time horizon and varying $x$, where $k_t=xt^{1/2-H}$. Rate function is computed with the Ritz method with $N=5$ Fourier basis function.
\label{fig:2}}
\end{figure}


\medskip
We test now the moderate deviation asymptotics in Remark \ref{th:moderate_deviations}. In order to do so, let us recall an expansion to the fourth order of the rate function that allows us to use the second order moderate deviation.\footnote{This expansion is given in \cite[Lemma 6.1]{FGP22}, where the kernel $C(t-s)^{H-1/2}$ is used. However, in the proof of this result the specific shape of the kernel is not used, but only self-similarity, and therefore it holds for $K(t,s)$ in \eqref{eqn:kernelfbm} as well.} We denote now $K_{H}f(t)=\int_0^t K_{H}(t,s)f(s)ds$ and with $\overline{K_{H}}$ the adjoint of $K_{H}$ in $L^2[0,1]$, so that $\overline{K_{H}}f(u)=\int_u^1 K_{H}(t,u) f(t) dt$, where again $K_{H}$ is the fBM kernel in \eqref{eqn:kernelfbm}.

\begin{lemma}[Fourth order energy expansion]
\label{expansion:J}
Let us assume that $\sigma(\cdot)$ is countinuously differentiable two times.
Let $J(x)$ be the energy function in \eqref{J}. Then
\begin{equation}\label{expanJ}
\begin{split}
J(x)&= \frac{J''(0)}{2} x^2 +
\frac{J'''(0)}{3!} x^3+\frac{J^{(4)}(0)}{4!} x^4+O(x^5)\,\\
\end{split}
\end{equation}
where
$$
J''(0)=\frac{1}{\sigma(0)^2}
,\quad J'''(0)=- 6 \frac{ \rho \sigma'(0)}{\sigma(0)^4} \langle K_{H}1,1 \rangle, 
$$
and
\[
\begin{split}
J^{(4)}(0)
 & =12  \frac{\sigma'(0)^2}{\sigma(0)^6}
\left\{
9 \rho^2 
 \langle K_{H}1,1\rangle^2
- \rho^2
 \langle (K_{H}1)^2 ,1\rangle  
-
 \langle (\overline{K_{H}}1)^2 ,1\rangle  
-
2 \rho^2 
\langle K_{H}1,\overline{K_{H}}1\rangle
\right\}+\\
& -12 \frac{\sigma''(0)}{\sigma(0)^5}   \rho^2
 \langle (K_{H}1)^2 ,1\rangle. \end{split}
\]

Plugging \eqref{expanJ} into \eqref{exp:imp:vol} and fixing $n=4$, from straightforward computations, we obtain the equivalent asymptotic formula
\begin{equation}\label{equiFor}
\sigma(t,\ell_t)=\Sigma_{fOU}(0)+\Sigma_{fOU}'(0)x t^\beta + \frac{\Sigma_{fOU}''(0)}{2}x^2 t^{2\beta}+o(t^{2H-2\beta})
\end{equation}
where
\[
\begin{split}
\Sigma_{fOU}(0)&= \sigma(0), \quad
\Sigma_{fOU}'(0)=\frac{\rho \sigma'(0)\langle K_{H}1,1\rangle}{\sigma(0)}, \\
\frac{\Sigma_{fOU}''(0)}{2}
&=
 \frac{\sigma'(0)^2}{\sigma(0)^3}
\left\{
- 3 \rho^2 
 \langle K_{H}1,1\rangle^2
+ \frac{\rho^2}{2}
 \langle (K_{H}1)^2 ,1\rangle  
+
\frac{ 1}{2}
 \langle (\overline{K_{H}}1)^2 ,1\rangle  
+
 \rho^2 
\langle K1,\overline{K_{H}}1\rangle
\right\}\\
&+ \frac{\sigma''(0)}{\sigma(0)^2}  \frac{ \rho^2 }{2}
 \langle (K_{H}1)^2 ,1\rangle.  
\end{split}
\]
\end{lemma}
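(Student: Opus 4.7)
My plan follows the strategy of Lemma 6.1 in \cite{FGP22}, which, as noted in the footnote, uses only the self-similarity of the Volterra kernel and therefore extends verbatim to the fBM kernel $K_H$ in \eqref{eqn:kernelfbm}. The starting observation is that at $x=0$ the cost functional in \eqref{J} is minimized by $f\equiv 0$, giving $J(0)=0$. Under $\sigma \in C^2$, the Euler--Lagrange equations together with the implicit function theorem give smoothness of the minimizer $f^*(x)$ and of $J(x)$ near $0$, so that $J'(0)=0$ and the Taylor expansion \eqref{expanJ} is meaningful.

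I would then use the perturbative ansatz $f^*(x)=xf_1+x^2 f_2+x^3 f_3+O(x^4)$ in $H_0^1[0,1]$, so that $\hat f^*(t)=x(K_H\dot f_1)(t)+x^2(K_H\dot f_2)(t)+O(x^3)$. Taylor--expanding $\sigma(\hat f)$ and $\sigma^2(\hat f)$ to the appropriate order and substituting into \eqref{J}, the stationarity conditions determine $f_1,f_2,f_3$ order by order. At leading order one finds $f_1(t)=(\rho/\sigma(0))\,t$ (linear because the cost at this order depends on $f_1$ only through $f_1(1)$), hence $\dot f_1\equiv \rho/\sigma(0)$ and $K_H \dot f_1=(\rho/\sigma(0))\,K_H 1$; plugging back yields $J''(0)=1/\sigma(0)^2$. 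The higher-order corrections $f_2,f_3$ generate products of the functions $K_H 1$ and $\overline{K_H}1$ integrated against $1$ on $[0,1]$, the adjoint $\overline{K_H}$ arising from the identity $\int_0^1 (K_H\dot f)(t)g(t)\,dt=\int_0^1 \dot f(s)(\overline{K_H}g)(s)\,ds$ whenever a derivative $\dot f_k$ must be contracted with a function of $\hat f_j$. The four inner products $\langle K_H 1,1\rangle$, $\langle (K_H 1)^2,1\rangle$, $\langle (\overline{K_H}1)^2,1\rangle$, $\langle K_H 1,\overline{K_H}1\rangle$ in the statement arise precisely as these contractions.

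For the implied volatility asymptotic \eqref{equiFor}, I would substitute \eqref{expanJ} into \eqref{exp:imp:vol} with $n=4$, expand the resulting polynomial in $xt^\beta$ up to order $(xt^\beta)^2$, use $J''(0)\sigma(0)^2=1$ to simplify the constant term to $\sigma(0)$, and read off the coefficients of $xt^\beta$ and $(xt^\beta)^2$ by direct algebraic manipulation. This part is essentially routine once \eqref{expanJ} is in hand.

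The main obstacle is the bookkeeping of the fourth-order term in $J$: the $x^4$ coefficient receives contributions from (i) the cubic term in the Taylor expansion of $\sigma(\hat f)$, (ii) the quadratic term in $\sigma$ combined with the $f_2$-correction, and (iii) the Taylor expansion of the ratio $(x-A(f))^2/B(f)^2$ through order $x^4$, and each contribution must be carefully integrated by parts to collapse into the stated combination of the four inner products above. Once signs and factorials are tracked consistently, matching the resulting coefficient against $J^{(4)}(0)/4!$ produces the formula claimed in the statement.
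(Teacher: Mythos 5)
Your proposal matches the paper's approach: the paper does not prove this lemma directly but instead (in a footnote preceding the statement) cites Lemma 6.1 of \cite{FGP22} and observes that the proof there uses only self-similarity of the kernel, which the Mandelbrot--Van Ness kernel $K_H$ in \eqref{eqn:kernelfbm} satisfies. Your sketch of the underlying perturbative argument — minimizer ansatz $f^*(x)=xf_1+x^2f_2+\cdots$ with $f_1$ linear and $\dot f_1\equiv\rho/\sigma(0)$, Taylor expansion of $\sigma(\hat f)$ and $\sigma^2(\hat f)$, integration by parts to produce $\overline{K_H}$, and collapse into the four inner products — is a correct description of how that cited proof works, and the observation that $C^2$ regularity of $\sigma$ suffices because $J^{(4)}(0)$ involves only $\sigma(0),\sigma'(0),\sigma''(0)$ is accurate.
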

We plot in Figure \ref{fig:3} implied volatilities computed via Monte Carlo simulations and the corresponding approximation given in \eqref{equiFor}.
We take again $\sigma(\cdot)$ as in\eqref{volFun}, with parameters $H=0.3, \rho=-0.7, \sigma_0=0.2, \eta=0.2$, and $\beta=0.125$. We first note that we fix $n=4$ in \eqref{exp:imp:vol}, and so we choose $\beta \in (\frac{2H}{n+1}, \frac{2H}{n}]$, that is the interval $(0.12, 0.15]$. 
With respect to our previous experiments, we also take the smaller vol of vol parameter $\eta=0.2$, which is in line with the choices in \cite{BFGHS,FGP22}. Indeed, the quality of the approximation deteriorates as $\eta$ grows, and for larger $\eta$ the asymptotic formula \eqref{equiFor} is accurate on a smaller time interval.   

\begin{figure}[t]
\centering
\includegraphics[width=0.7\linewidth]{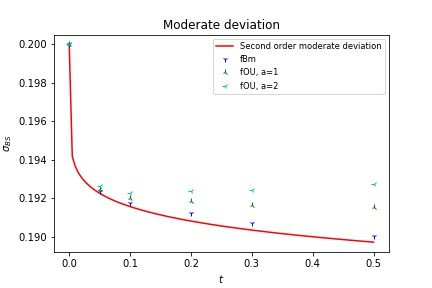}
\caption{Moderate deviation implied volatilities with fBM-driven stochastic volatility (blue), with fOU-driven stochastic volatility with $a=1$ (green) and fOU-driven stochastic volatility with $a=2$ (light blue), $\ell_t=x t^{1/2-H+\beta}$ with $x=0.3$ and $\beta=0.125$. Model parameters: $H=0.3, \rho=-0.7, \sigma_0=0.2$ and $\eta=0.2$. Monte Carlo parameters: $10^7$ trajectories, $500$ time-steps. We plot each model with fixed $x$ and varying maturity.
\label{fig:3}}
\end{figure}

In Figure \ref{fig:skew} we compare, with $k_{t}=x t^{H-1/2}$, $x>0$, the absolute value of the large deviations finite difference implied skew
\begin{equation}\label{eq:abs:skew}
\Psi_{t}:=\frac{|\sigma_{BS}(t, k_t)-\sigma_{BS}(t, -k_t)|}{2 k_{t}}
\end{equation}
computed on fBm-driven and fOU-driven stochastic volatility models, with the asymptotic skew expected from Corollary \ref{corollary:skew}, where we also use the approximation, as $x\to 0$,
\[
\frac{\Sigma_{fOU}(x)-\Sigma_{fOU}(-x)}{2x}
\sim
\Sigma_{fOU}'(0)=\frac{\rho \sigma'(0)\langle K_{H}1,1\rangle}{\sigma(0)}.
\]
We observe that, consistently with the smile slopes observed in Figure \ref{fig:2},
larger mean-reversion parameters $a$ correspond to flatter smiles and smaller skews (in absolute value), 
smaller mean-reversion parameters $a$ correspond to steeper smiles and larger skews (in absolute value),
fBm has a larger skew than fOU, and the asymptotic skew is even larger than the one generated from fBm, although very close to it. As maturity $t\to 0$, the difference between all these skews vanishes.

This could reflect the fact that larger mean-reversion parameters $a$ give more concentrated volatility trajectories, with $V$ in \eqref{FOU} staying closer to $0$ and therefore the stochastic volatility path $(\sigma_0 \exp(\frac{\eta}{2} \,V_{t}))_{t>0}$ staying closer to the spot-vol $\sigma_0$. This may produce flatter implied volatility surfaces and explain smiles and skews observed in Figures \ref{fig:2}, \ref{fig:3} and \ref{fig:skew} corresponding to larger $a$'s. On the short end of the surface, however, all of these have to coincide due to our asymptotic results.
Let us also mention that the discrepancy observed in Figure \ref{fig:2} on the level of the smile (regardless of the skew), between the asymptotic red line and all the simulated ``positive maturity'' lines is likely due to a term-structure term of order $t^{2H}$, for which we refer the reader to \cite{FGP22} (large deviations setting) and \cite{euch2018short} (central limit setting).

\begin{figure}[t]
\centering
\includegraphics[width=0.495\linewidth]{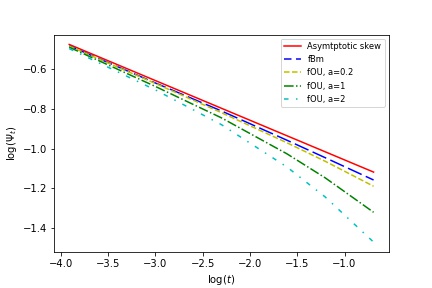}
\includegraphics[width=0.495\linewidth]{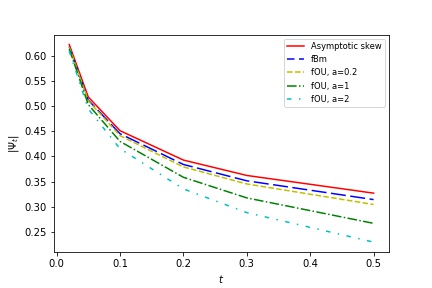}
\caption{ Implied skew in \eqref{eq:abs:skew} with fBm-driven stochastic volatility, fOU-driven stochastic volatility with $a=0.2$, $a=1$, $a=2$ and asymptotic skew from \eqref{skew:expansion}. Model parameters: $H=0.3,  \rho=-0.7, \sigma_0=0.2$ and $\eta=1.5$. Monte Carlo parameters: $10^7$ trajectories and $500$ time-steps. We take $x=0.01$ (recall that $k_t=x t^{1/2-H}$). Log-plot on the left hand side, linear plot on the right hand side.
\label{fig:skew}}
\end{figure}

\section{Conclusion}
In this paper, we prove a short-time large deviation principle for stochastic volatility models, with volatility given as a function of a Volterra process. This result holds without strict self-similarity assumptions on the processes driving the model, and can therefore be applied to some notable examples of (non self-similar) rough volatility models.

 We  first consider an application to the log-modulated rough stochastic volatility models introduced in \cite{BHP20}. We derive short-maturity asymptotics for European option prices and implied volatility surfaces. Our results on the implied skew at the large deviations regime are consistent with the results at the Edgeworth central-limit regime derived in \cite{BHP20}, but allow for valuation of options further from the money.
  
Then we consider models where volatility is given as a function of a fractional Ornstein-Uhlenbeck process, as e.g. in the seminal work \cite{GJR18}. In this case we find that the limit short maturity behavior of option prices and implied volatilities, as well as the short time implied skew, is the same as the one of a model driven by a fractional Brownian motion. We investigate this fact numerically on  simulation results, discussing also moderate deviations pricing and implied skew asymptotics.

\FloatBarrier

\begin{appendices}
\section{The large deviations principle}\label{app:ldp}
Large deviations give an asymptotic computation of small
probabilities on an exponential scale (see e.g.
\cite{DemZei} as a reference on this topic). We recall some
basic definitions (see e.g. Section 1.2 in \cite{DemZei}).
Throughout this paper a speed function is a sequence $\{v_n:n\geq
1\}$ such that $\lim_{n\to\infty}v_n=\infty$. A sequence of random
variables $\{Z_n:n\geq 1\}$, taking values on a topological space
$\mathcal{X}$, satisfies the large deviation principle (LDP) with rate function $I$ and speed function $v_n$ if
$I:\mathcal{X}\to[0,\infty]$ is a lower semicontinuous function,
$$\liminf_{n\to\infty}\frac{1}{v_n}\log P(Z_n\in O)\geq-\inf_{x\in O}I(x)$$
for all open sets $O$, and
$$\limsup_{n\to\infty}\frac{1}{v_n}\log P(Z_n\in C)\leq-\inf_{x\in C}I(x)$$
for all closed sets $C$. A rate function is said to be \emph{good} if all
its level sets $\{\{z\in\mathcal{Z}:I(z)\leq\eta\}:\eta\geq 0\}$
are compact.
Therefore,  if an LDP holds, and $\Gamma$ is a Borel set such that  $\inf_{x\in \Gamma^o}I(x)=\inf_{x\in \bar\Gamma}I(x)$ ($\Gamma^o$ and $\bar\Gamma$ are the interior and the  closure of $\Gamma$ respectively), then
$$\lim_{n\to\infty}\frac{1}{v_n}\log P(Z_n\in \Gamma)=-I(\Gamma)$$
where $ I(\Gamma)= \displaystyle\inf_{x\in \Gamma^o}I(x)=\inf_{x\in \bar\Gamma}I(x).$
 In this case we write
$$
P(Z_n\in \Gamma)\approx e^{- {I(\Gamma)}{v_n}}.
$$ 

Moreover $\{Z_n:n\geq 1\}$ is exponentially tight
with respect to the speed function $v_n$ if, for all $b>0$, there
exists a compact $K_b\subset\mathcal{X}$ such that
$$\limsup_{n\to\infty}\frac{1}{v_n}\log P(Z_n\notin K_b)\leq-b.$$

The concept of exponential tightness plays a crucial role in large
deviations; in fact this condition is often required to establish
that the LDP holds for a sequence of random
variables taking values on an infinite dimensional topological
space. In this paper we refer to condition (8) and (9) in Section 2 in   \cite{MacPac})
   which yield the exponential tightness when
the topological space $\mathcal{X}$ of the continuous function is equipped with the
topology of the uniform convergence.

\end{appendices}

\small

\end{document}